\documentclass[9pt, twocolumn, journal]{IEEEtran}
\input{preamble_ieee.sty}
\usepackage{mathtools}
\usepackage{flushend}
\title{Computing Entropy Rate Of Symbol Sources \\ \& A Distribution-free Limit Theorem} 
\author{
\begin{tabular}{cc}
Ishanu Chattopadhyay & Hod Lipson \\ \texttt{ic99@cornell.edu} & \texttt{hod.lipson@cornell.edu}
\end{tabular}
}
\begin{document}  
\maketitle 
\begin{abstract}
Entropy rate of sequential data-streams naturally quantifies the  complexity of the generative process. Thus entropy rate fluctuations   could be used as a  tool to recognize  dynamical perturbations in  signal sources, and could  potentially be carried out without explicit   background noise characterization. However, state of the art algorithms to estimate the entropy rate have markedly slow convergence; making such entropic approaches non-viable in practice. We present here a fundamentally new approach to estimate entropy rates, which is demonstrated to converge significantly faster in terms of input data lengths, and is shown to be effective in diverse applications ranging from the estimation of the  entropy rate of English texts to the estimation of  complexity of chaotic dynamical systems. Additionally, the convergence rate of  entropy estimates  do not follow from any standard limit theorem, and reported algorithms  fail to provide  any  confidence bounds on the computed values.  Exploiting a connection to the theory of probabilistic automata, we establish a  convergence rate of $O(\log \vert s \vert/\sqrt[3]{\vert s \vert})$ as a function of the input length $\vert s \vert$,  which then   yields explicit  uncertainty estimates, as well as  required  data lengths to satisfy pre-specified confidence bounds.
\end{abstract}
\begin{IEEEkeywords}
 Entropy rate, Stochastic processes, Probabilistic  automata,  Symbolic dynamics
\end{IEEEkeywords}
\allowdisplaybreaks{ 
%

\section{Motivation, Background \& Contribution}
The entropy rate of a stationary and ergodic process converges in probability to the  per-letter Kolmogorov complexity of a single sufficiently long sample path~\cite{horibe03}. While  Kolmogorov complexity is incomputable, entropy rates  can, in principle, be estimated. Ability to quantify  the complexity of a signal source, even in the average sense,  can provide valuable insights into the driving dynamics; and can potentially be used as a  tool to detect dynamical anomalies without explicit knowledge of  background noise processes.

However,  source entropy rate estimation  from an observed sample path is computationally non-trivial. Even with the assumptions of ergodicity and stationarity, one cannot fruitfully apply the defining relation in Eq.\eqref{eqentropy} due to the exponential increase in the number of different words with the word-length. This is particularly important  if there are long-range dependencies in the symbol stream.  Such dependencies introduce additional long-range structure; decreasing the source entropy  in the process. In such cases  unacceptably long words or \textit{blocks} must be considered, and pre-mature truncation of the computation would lead to large errors. 

 The best known algorithms  that carry out a  more efficient  computation are based on  Lempel-Ziv (LZ) source coding~\cite{LZ77,LZ78,LZ77opt}. The LZ coding algorithms are asymptotically optimal, $i.e.$ their compression rate approaches the source entropy rate for any ergodic stationary stochastic process. The key idea here is adaptive dictionary compression: parse the input string into distinct phrases, and represent them with codewords, making sure that  short codewords are assigned to common phrases. Done optimally, one ends up with a compressed string, such that the ratio of the input and output lengths approach the source entropy rate. Different variations on this idea have been reported~\cite{langdon83,grassberger89}. Techniques distinct from  LZ parsing  are also known, $e.g.$,  Rissanen~\cite{rissanen83} reported a universal compression scheme, which instead of gathering parsed segments of the input along with their occurrence counts, collects  the ``contexts'' in which each symbol of  the  input string  occurs, together with   conditional  occurrence counts.

Importantly, a majority of the reported techniques do more than just compute the entropy rate; they are indeed full-scale data compression utilities, that produce a decodable representation of the input. Can we do better if we are only interested in the former? This paper provides an affirmative answer to this possibility.

Secondly,  existing techniques lack convergence rate estimates; computation of error bars for reported approaches do not follow from any standard limit theorem. There is indeed no analytical way to check for the internal consistency of
the estimation or its accuracy. We may observe gradual convergence to a limiting value, and this is indeed guaranteed by theory;  but  are unable to  provide uncertainty bounds on the computed  estimate with finite inputs. Typically observed slow convergence in all non-trivial scenarios, for all reported algorithms,  makes this a key issue. An empirical relationship, without proof or theoretical backing, has been suggested~\cite{shgs96}, which conjectures the $\vert s \vert$-dependence ($\vert s \vert$ being the length of the input $s$) of the estimated entropy rate $\widetilde{H}$ to follow
$\widetilde{H} \simeq H_{actual} + c \frac{\log \vert s \vert}{\vert s \vert^\gamma}, \textrm{where } c,\gamma \textrm{ are fit parameters}$. 
In this paper, we show that, at least with our algorithm, the convergence rate is given by $O(\log \vert s \vert / \sqrt[3]{\vert s \vert})$. This is a distribution-free result, in the sense that the asymptotic  bound does not depend on the source characteristics. In consequence, we can derive explicit uncertainty estimates at specified  confidence bounds on the estimated entropy rate for finite-length input data.

\begin{figure}[t]
\tikzexternalenable
\centering
 \begin{tikzpicture}[,font=\bf\fontsize{8}{8}\selectfont,->,>=latex',shorten >=1pt,auto,node distance=1.2cm,  scale=1 ,]
  \node [circle, draw=black, dashed, thick ] (A) at (0.15,1.5) [align=center]{\bf Hidden\\\bf Process};
\draw [fill=lightgray,draw=none,opacity=.9, xslant=0.0,yslant=.4, path fading=east] (0,0) -- (1.5,0) -- (1.5,2.5) -- (0,2.5) -- cycle;  
\draw [xshift=-.8in,yshift=.275in,fill=lightgray,draw=none,opacity=.9, xslant=0.0,yslant=-.35, path fading=west] (0,0) -- (2,0) -- (2,2.5) -- (0,2.5) -- cycle;  
\draw [xshift=-.635in,yshift=1.225in,fill=lightgray,draw=none,opacity=1, xslant=2.45,yslant=-.19, path fading=west] (0,0) -- (3,0) -- (3,.5) -- (0,.5) -- cycle; 


    \node [anchor=west] (B) at (A.east) [align=center]{\bf 01010010101011100101010101010$\boldsymbol{\cdots}$};
   \node [anchor=north, text=Red4] (C) at ([yshift=-.1in]B.south) [align=left]{\textbullet~Entropy rate?};
   \node [anchor=north west, text=Red4] (C1) at ([yshift=0.07in]C.south west) [align=left]{\textbullet~Convergence Rate of estimate?};  
 \node [anchor=south, text=gray] (C) at (B.north) [align=center]{States are\\ hidden};
   \node [anchor=north east,text=gray,font=\bf\fontsize{7}{8}\selectfont] (D) at ([yshift=.45in,xshift=-.1in]A.south west) [align=center]{ Stationary\\ergodic\\stochastic\\process};
\end{tikzpicture} 
\vspace{-5pt}

\captionN{\textbf{Problem description.} Given a quantized data stream, how do we compute the entropy rate of the hidden  process? Even with the  assumption of  stationarity and ergodicity for  the generator, reported algorithms converge very slowly. Additionally,  these convergence rates are unknown for such approaches; implying that we cannot put uncertainty bounds on the computed values in practice. We show that a significantly faster computation of the entropy rate is possible; and derive a universal lower bound on how slowly this convergence might occur.
}\label{figstrm}
\vspace{-12pt}

\end{figure}
\subsection{Key Insight}
Our approach is based on  modeling discrete and finite-valued stationary and ergodic sources as probabilistic automata. Our automata is distinct from that of Paz~\cite{P71}, and each model in our case is  in fact an  encoding of a measure defined on the space of  strictly infinite strings over a finite alphabet. While the formalisms are  completely different, some aspects of this approach has subtle parallels to that of Rissanen's ``context algorithm''~\cite{rissanen83}; his search for contexts which yield similar probabilities of generating future symbols is analogous to our search for a synchronizing string in the input stream - a finite sequence of symbols that, once executed on a probabilistic automaton, leads to a fixed state irrespective of the initial conditions. Of course we do not know anything about the hidden model a priori; but nevertheless we establish that such a string, at least in a well-defined approximate sense, always exists and is identifiable efficiently. Finally, we show that, given such an approximate synchronizing string, we can use results from non-parametric statistics to bound the probability of error as a function of the input length.

\subsection{Entropy  \& Entropy Rate}
Entropy $H(X)$ of a discrete random variable $X$, taking values in the alphabet $\Sigma$,  is defined as:
\cgather{
H(X) = - \sum_{x \in \Sigma} p(x)\log p(x)
}where $p(x)$ is the probability of occurrence of $x \in \Sigma$. 
The base of the logarithm is generally taken to be $2$, and then the entropy is being expressed in \textit{bits}. While the definition of entropy of a random variable may be obtained axiomatically, a perhaps more compelling  approach is to show that it arises 
as the average length of the shortest description of a random variable~\cite{cover}.

The joint entropy of a set of random variables $X_1,\cdots,X_n$, with $X_i$  taking values in the alphabet $\Sigma_i$, is defined in the usual manner: 
\cgather{
H(X_1,\cdots,X_n) = - \sum_{x_i \in \Sigma_i} p(x_1,\cdots, x_n) \log_2 p(x_1,\cdots, x_n) 
}
The chain rule for entropy calculations~\cite{cover} follows from the definitions, and is of particular importance:
\cgather{\label{eqchrule}
H(X_1,\cdots,X_n)= \sum_{i=1}^n H(X_i \vert X_{i-1},\cdots,X_1)
}
The notion of entropy formalizes the Asymptotic Equipartion Property (AEP): If discrete random variables  $X_1,\cdots, X_n$ are i.i.d. and have probability mass function $p(x)$, then we have:
\cgather{
-\frac{1}{n} \log p(X_1,\cdots, X_n) \xrightarrow{a.s} H(X) = - \sum_{x \in \Sigma} p(x)\log p(x)
}
The AEP implies that $nH(X)$ bits suffice on average to describe $n$ i.i.d. random variables. If the random variables are not independent,
the entropy $H(X_1,\cdots,X_n)$ still grows asymptotically linearly with $n$ at a rate known as the entropy rate of the process. In particular, if the random variables  define a stationary ergodic stochastic process $\mathcal{X} = \{X_i\}$, then the AEP still holds:
\cgather{
-\frac{1}{n}\log p(X_1,\cdots, X_n) \xrightarrow{\textrm{a.s.}} H(\mathcal{X})
}
where $H(\mathcal{X})$ is the entropy rate of the process defined as:
%
\cgather{\label{eqentropy}
H(\mathcal{X}) = \lim_{n \rightarrow \infty} \frac{1}{n} H(X_1,\cdots,X_n)
}
As in the case of the i.i.d variables,  \textit{typical sequences} of length $n$  may be represented using approximately $nH(\mathcal{X})$ bits. Thus the entropy rate  quantifies the average description length of the process, and hence its expected complexity~\cite{horibe03}. 

\section{Stochastic Processes \& Probabilistic Automata}\label{sec3}
As mentioned earlier, our approach hinges upon effectively using probabilistic automata to model stationary, ergodic processes. Our automata models are distinct to those reported in the literature~\cite{P71,VTCC05}.  The details of this  formalism can be found in \cite{CL12g}; we include a brief overview here for the sake of completeness.

\begin{notn} 
 $\Sigma$  denotes a finite alphabet  of symbols. The set of all finite but possibly unbounded strings on $\Sigma$ is denoted by $\Sigma^\star$~\cite{HMU01}. The set of finite strings over $\Sigma$ form a concatenative monoid, with the empty word $\lambda$ as identity. 
The set of strictly infinite strings on $\Sigma$ is denoted as $\Sigma^\omega$, where $\omega$ denotes the first transfinite cardinal. 
For a string $x$,  $\vert x \vert$ denotes its length, and for a set $A$,    $\vert A \vert$ denotes its cardinality. Also, $\Sigma^d_+ = \{x \in \Sigma^\star \textrm{ s.t. } \vert x \vert  \leqq d\}$.
\end{notn}

\begin{defn}[QSP]\label{defQSP}
A QSP $\mathcal{H}$ is a discrete time $\Sigma$-valued strictly stationary, ergodic stochastic process, $i.e.$ 
\cgather{
\mathcal{H} = \left \{ X_t: X_t \textrm{ is a $\Sigma$-valued random variable}, t \in \mathbb{N}\cup \{0\} \right \}
} 
A  process is ergodic if  moments may be calculated from a sufficiently long realization, and strictly stationary if moments are time-invariant.
\end{defn}
We next formalize the connection of QSPs to PFSA generators.
  We develop the theory assuming multiple realizations of the QSP $\mathcal{H}$, and  fixed initial conditions. Using ergodicity, we will be then able to apply our construction to a single sufficiently long realization, where  initial conditions cease to matter.
\begin{defn}[$\sigma$-Algebra On Infinite Strings]
 For the set of infinite strings on  $\Sigma$, we define $\mathfrak{B}$ to be the smallest $\sigma$-algebra generated by the family of sets $\{  x \Sigma^\omega : x \in \Sigma^\star\}$.
\end{defn}
\begin{lem}\label{QSPtoProb}
Every QSP  induces a  probability space $(\Sigma^\omega,\mathfrak{B},\mu)$.
\end{lem}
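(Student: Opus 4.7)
The plan is to invoke the standard construction by which a consistent family of finite-dimensional distributions extends to a unique probability measure on $\Sigma^\omega$. Concretely, I would first specify the measure on the cylinder generators $\{x\Sigma^\omega : x \in \Sigma^\star\}$ using the joint law of the QSP, then verify that this set function is a countably additive pre-measure on the algebra generated by cylinders, and finally appeal to Carath\'eodory's extension theorem (equivalently, Kolmogorov's extension theorem specialized to the discrete, finite-alphabet setting) to get existence and uniqueness of $\mu$ on $\mathfrak{B}$.

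For the first step, given $x = \sigma_0 \sigma_1 \cdots \sigma_{n-1} \in \Sigma^\star$, define
\[
\mu_0(x\Sigma^\omega) \triangleq \Pr\bigl(X_0 = \sigma_0, X_1 = \sigma_1, \ldots, X_{n-1} = \sigma_{n-1}\bigr),
\]
with $\mu_0(\lambda\Sigma^\omega) = \mu_0(\Sigma^\omega) = 1$. Since $\Sigma$ is finite, every cylinder $x\Sigma^\omega$ decomposes as the finite disjoint union $\bigsqcup_{\sigma \in \Sigma} (x\sigma)\Sigma^\omega$, and the corresponding identity $\mu_0(x\Sigma^\omega) = \sum_{\sigma \in \Sigma} \mu_0((x\sigma)\Sigma^\omega)$ follows from the law of total probability applied to the random variable $X_n$. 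Strict stationarity of $\mathcal{H}$ (Definition~\ref{defQSP}) ensures that this assignment is independent of the time-origin used to read off the joint law, so the Kolmogorov consistency conditions hold automatically.

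For the second step, I would show that $\mu_0$ extends to a finitely additive set function on the algebra $\mathcal{A}$ consisting of finite disjoint unions of cylinders. The cylinders form a semi-algebra (their intersections are again cylinders or empty, and complements decompose into finitely many cylinders because $\vert \Sigma \vert < \infty$), so finite additivity on $\mathcal{A}$ is routine. Countable additivity reduces to showing that if $\{C_k\} \subset \mathcal{A}$ decreases to $\emptyset$ then $\mu_0(C_k) \to 0$; this follows from a compactness argument, since $\Sigma^\omega$ in the product topology is a compact metric space (Tychonoff) and each cylinder is clopen, so a decreasing sequence of nonempty clopens has nonempty intersection.

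The main obstacle is the countable additivity verification, because this is where topology intervenes and where one must be careful that ``strings of length $\omega$'' are being treated as points of a compact space rather than as purely combinatorial objects. Once that is in hand, Carath\'eodory extends $\mu_0$ uniquely to a probability measure $\mu$ on $\sigma(\mathcal{A}) = \mathfrak{B}$, yielding the required probability space $(\Sigma^\omega, \mathfrak{B}, \mu)$. Ergodicity plays no role in this lemma itself, but it is the property that will later license replacing ensemble averages against $\mu$ by time averages along a single realization, which is how the construction gets used in the sequel.
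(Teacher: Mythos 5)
Your proof is correct, but it follows a different route than the paper. You define the cylinder measure from the joint law of $(X_0,\ldots,X_{n-1})$, check consistency and finite additivity on the semi-algebra of cylinders, obtain countable additivity from the compactness of $\Sigma^\omega$ (cylinders being clopen), and then invoke Carath\'eodory/Kolmogorov extension; as you correctly note, ergodicity is not needed for this. The paper instead defines $\mu(x\Sigma^\omega)$ \emph{operationally}, as the limiting relative frequency of initial occurrences of $x$ over a large number of realizations $N_R$, explicitly invoking stationarity and ergodicity, and then simply asserts that the measure extends to the rest of $\mathfrak{B}$ ``via at most countable sums.'' Your version buys rigor: the extension step is the genuinely delicate point (elements of $\mathfrak{B}$ are not merely countable unions of cylinders), and your compactness argument plus Carath\'eodory closes exactly the gap that the paper's one-line extension glosses over. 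The paper's version buys a direct link between the abstract measure and empirically observable counts, which foreshadows how the measure is actually estimated from data later in the paper (symbolic counts and derivatives), at the cost of conflating the existence proof with the frequency interpretation that, strictly speaking, only ergodicity and the later machinery justify. Either argument establishes the lemma; yours is the cleaner existence proof, the paper's is the more algorithmically suggestive one.
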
 
\begin{proof}
Assuming stationarity, we can construct a probability measure $\mu: \mathfrak{B} \rightarrow [0,1]$  by defining
for any sequence $x\in \Sigma^\star\setminus \{\lambda\}$, and a sufficiently large number of realizations $N_R$ (assuming ergodicity):
\cgathers{
 \mu( x \Sigma^\omega) = \lim_{N_R \rightarrow \infty}\frac{ \textrm{\small \#  of initial occurrences of $x$}}{\begin{array}{c} \textrm{\small \#  of initial occurrences} \\\textrm{\small of all sequences of length  $\vert x \vert$}\end{array}}
}
and extending the measure to elements of $\mathfrak{B} \setminus B$ via at most countable sums. Thus $\mu(\Sigma^\omega) = \sum_{x \in \Sigma^\star} \mu( x \Sigma^\omega) = 1$, and for the null word  $\mu(\lambda \Sigma^\omega) =  \mu(\Sigma^\omega) = 1$.
\end{proof}

 \begin{notn}
 For notational brevity, we denote $\mu( x \Sigma^\omega)$ as $Pr(x)$.
 \end{notn}

Classically,  automaton states are equivalence classes for the  Nerode relation;  two strings are  equivalent if and only if any finite extension of the strings is either both in the language under consideration, or neither are~\cite{HMU01}. We use a probabilistic extension~\cite{CR08}.

\begin{defn}[Probabilistic Nerode Equivalence Relation]\label{defnerode} $(\Sigma^\omega,\mathfrak{B},\mu)$ induces an equivalence relation $\sim_{N}$ on the set of finite strings $\Sigma^\star$ as:
\cgather{
\forall x,y \in \Sigma^\star, 
\smash{x \sim_{N} y \iff \forall z \in \Sigma^\star \bigg (\big (} Pr(xz)=Pr(yz)=0 \big )  \notag \\  \bigvee \big \vert  Pr(xz)/Pr(x)- Pr(yz)/Pr(y)\big \vert =0\bigg )
}
\end{defn}

\begin{notn}
For $x \in \Sigma^\star$,  the equivalence class of $x$ is  $[x]$.
\end{notn}

It is easy to see   that $\sim_{N}$ is right invariant, $i.e.$ 
\cgather{
x \sim_{N} y \Rightarrow \forall z \in \Sigma^\star, xz \sim_{N} yz 
}
A right-invariant equivalence on $\Sigma^\star$ always induces an automaton structure; and hence the probabilistic Nerode relation induces  a probabilistic automaton: states are equivalence classes of $\sim_{N}$, and the transition structure arises as follows: For states $q_i,q_j$, and  $x \in \Sigma^\star$,
\begin{gather}
([x]=q ) \wedge ([x \sigma ] = q' 
 )\Rightarrow q \xrightarrow{\sigma} q'
\end{gather}
Before formalizing the above construction, we introduce the notion of probabilistic automata with initial, but no final, states.


\begin{defn}[Initial-Marked PFSA]\label{defpfsa} An initial marked probabilistic finite state automaton (a Initial-Marked PFSA)   is a quintuple $(Q,\Sigma,\delta,\pitilde,q_0)$, where $Q$ is a finite  state set, $\Sigma$ is the alphabet,  $\delta:Q \times \Sigma \rightarrow Q$ is the  state transition function,  $\pitilde : Q \times \Sigma \rightarrow [0,1]$  specifies the conditional symbol-generation probabilities, and $q_0\in Q$ is the initial state.  $\delta$ and $\pitilde$ are recursively extended to arbitrary  $y=\sigma x \in \Sigma^\star$ as follows:
\cgather{
\forall q \in Q, \delta(q,\lambda) = q\\
\delta(q,\sigma x) = \delta(\delta(q,\sigma),x)\\
\forall q \in Q, \pitilde(q,\lambda) = 1\\
\pitilde(q,\sigma x) = \pitilde(q,\sigma)\pitilde(\delta(q,\sigma),x)
}
Additionally, we impose  that for  distinct states $q_i,q_j \in Q$, there exists a string $x \in \Sigma^\star$, such that $\delta(q_i,x) = q_j$, and $\pitilde(q_i,x) > 0$.
\end{defn}

Note that the probability of the null word is unity from each state.

If the current state and  the next symbol is  specified, our next  state is fixed; similar to Probabilistic Deterministic Automata~\cite{Gavalda06}. However, unlike the latter, we lack final states in the model. Additionally, we assume our graphs to be strongly connected.

Later we will remove initial state dependence using ergodicity. 
Next we formalize how  a PFSA arises  from  a  QSP.

\begin{lem}[PFSA Generator]\label{lemPFSAgen}
Every Initial-Marked PFSA $G=(Q,\Sigma,\delta,\pitilde,q_0)$ induces a unique probability measure $\mu_G$ on the measurable space  $(\Sigma^\omega,\mathfrak{B})$.
\end{lem}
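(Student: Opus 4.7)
The plan is to build $\mu_G$ on the generating semialgebra of cylinder sets and extend via Carathéodory. Concretely, I would define a set function on the family $\mathcal{C} = \{x\Sigma^\omega : x \in \Sigma^\star\}$ by
\begin{gather*}
\mu_G(x\Sigma^\omega) \triangleq \pitilde(q_0, x),
\end{gather*}
with $\mu_G(\lambda\Sigma^\omega) = 1$ by the convention $\pitilde(q,\lambda)=1$. The recursive extension of $\pitilde$ guarantees that for every $x \in \Sigma^\star$,
\begin{gather*}
\pitilde(q_0,x) = \sum_{\sigma \in \Sigma} \pitilde(q_0,x)\pitilde(\delta(q_0,x),\sigma) = \sum_{\sigma \in \Sigma}\pitilde(q_0,x\sigma),
\end{gather*}
since from any state the conditional symbol distribution $\pitilde(q,\cdot)$ sums to one. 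This is exactly the finite-additivity statement $\mu_G(x\Sigma^\omega) = \sum_{\sigma}\mu_G(x\sigma\Sigma^\omega)$ on the cylinder sets.

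Next I would upgrade finite additivity to countable additivity. The key observation is that $\Sigma$ is finite, so $\Sigma^\omega$ endowed with the product (cylinder) topology is compact, and each cylinder $x\Sigma^\omega$ is clopen. Hence if a cylinder $x\Sigma^\omega$ is expressed as a countable disjoint union of cylinders $\bigsqcup_i y_i\Sigma^\omega$, compactness forces the cover to admit a finite subcover, so the union is essentially finite and additivity reduces to the finite case already handled. Equivalently, one checks the Kolmogorov consistency condition for the finite-dimensional distributions $P_n(x) = \pitilde(q_0,x)$ for $x \in \Sigma^n$; consistency is again the identity displayed above. Either route lets me invoke the Carathéodory / Kolmogorov extension theorem to produce a measure on $\sigma(\mathcal{C}) = \mathfrak{B}$.

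Uniqueness follows because $\mathcal{C} \cup \{\varnothing\}$ is a $\pi$-system generating $\mathfrak{B}$, and any two probability measures agreeing on a generating $\pi$-system agree on the whole $\sigma$-algebra (Dynkin's $\pi$-$\lambda$ theorem). Since $\mu_G$ is pinned down on cylinders by $\pitilde(q_0,\cdot)$, this is immediate.

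The only genuinely delicate step is countable additivity on the semialgebra; everything else is bookkeeping with Definition~\ref{defpfsa}. I expect the compactness-of-$\Sigma^\omega$ argument to be the cleanest way to dispatch it, since it sidesteps any direct $\epsilon$-$\delta$ estimate and uses only that $\vert\Sigma\vert < \infty$, which is part of the hypothesis. Strong connectedness and the positive-reachability condition on $(q_i,q_j)$ are not needed for existence and uniqueness of $\mu_G$ itself; they will matter later (for ergodicity arguments), but not for this lemma.
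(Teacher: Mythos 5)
Your proposal is correct, and at its core it follows the same route as the paper: define $\mu_G$ on the cylinder sets $x\Sigma^\omega$ by $\mu_G(x\Sigma^\omega)=\pitilde(q_0,x)$ and extend to $\mathfrak{B}$. The difference is one of completeness rather than strategy. The paper's own proof simply declares additivity over finite collections of cylinders by fiat, states that ``countable additivity of $\mu_G$ is immediate,'' and checks only $\mu_G(\Sigma^\omega)=\pitilde(q_0,\lambda)=1$; it neither verifies the consistency identity $\pitilde(q_0,x)=\sum_{\sigma\in\Sigma}\pitilde(q_0,x\sigma)$ (which is what makes the cylinder assignments compatible), nor gives any argument for countable additivity on the semialgebra, nor addresses the uniqueness claimed in the statement. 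You supply exactly these missing pieces: the consistency check from $\sum_{\sigma}\pitilde(q,\sigma)=1$, the upgrade from finite to countable additivity via compactness of $\Sigma^\omega$ (clopen cylinders, so a countable disjoint cylinder decomposition of a cylinder is essentially finite) or equivalently Kolmogorov consistency plus the extension theorem, and uniqueness via the $\pi$-$\lambda$ theorem applied to the generating $\pi$-system of cylinders. Your observation that strong connectivity and the reachability condition play no role in this lemma is also correct. One small caveat: the normalization $\sum_{\sigma\in\Sigma}\pitilde(q,\sigma)=1$ is only implicit in Definition~\ref{defpfsa} (via the phrase ``conditional symbol-generation probabilities''), so it is worth stating that you are using it as a standing hypothesis; the paper's proof tacitly relies on it as well.
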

\begin{proof}
Define   set function $\mu_G$ on the measurable space  $(\Sigma^\omega,\mathfrak{B})$:
\cgather{
\mu_G(\varnothing) \triangleq 0\\
\forall x \in \Sigma^\star, \mu_G(x\Sigma^\omega) \triangleq \pitilde(q_0,x)\\
\forall x,y \in \Sigma^\star, \mu_G(\{x,y\}\Sigma^\omega) \triangleq \mu_G(x\Sigma^\omega)+ \mu_G(y\Sigma^\omega)
}
Countable additivity of  $\mu_G$ is immediate, and  (See Definition~\ref{defpfsa}):
\cgather{
\mu_G(\Sigma^\omega) = \mu_G(\lambda\Sigma^\omega) = \pitilde(q_0, \lambda) = 1
}
implying that $(\Sigma^\omega,\mathfrak{B}, \mu_G)$ is a probability space.
\end{proof}

We refer to $(\Sigma^\omega,\mathfrak{B}, \mu_G)$ as the probability space generated by the Initial-Marked PFSA $G$. 

\begin{lem}[Probability Space To PFSA]\label{lemPROB2PFSA}
If the probabilistic Nerode relation corresponding to a  probability space $(\Sigma^\omega,\mathfrak{B}, \mu)$ has a finite index, then the latter has an  initial-marked PFSA generator.
\end{lem}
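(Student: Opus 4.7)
The strategy is to read off the PFSA directly from the probabilistic Nerode equivalence classes, and then invoke Lemma~\ref{lemPFSAgen} to match the generated measure to $\mu$.

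Concretely, let $Q := \Sigma^\star / {\sim_N}$, which is finite by hypothesis, and set $q_0 := [\lambda]$. Because $\sim_N$ is right-invariant, $\delta([x],\sigma) := [x\sigma]$ is a well-defined map $Q\times\Sigma\to Q$, and its recursive extension from Definition~\ref{defpfsa} satisfies $\delta([x],y)=[xy]$. For transition probabilities, set
\[
\pitilde([x],\sigma) \;:=\; Pr(x\sigma)/Pr(x)
\]
whenever $Pr(x)>0$; Definition~\ref{defnerode} applied with $z=\sigma$ shows this is independent of the chosen representative. All strings with $Pr(x)=0$ collapse to a single equivalence class, because $Pr(xz)\leqq Pr(x)=0$ for every $z$ forces the first disjunct of $\sim_N$ to hold; on that class the emission distribution may be fixed arbitrarily, since it is unreachable from $q_0$ along any positive-probability path and so contributes nothing to any cylinder measure.

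A short induction on $\vert y\vert$ using the recursive rule $\pitilde(q_0,\sigma x)=\pitilde(q_0,\sigma)\,\pitilde(\delta(q_0,\sigma),x)$ then yields
\[
\pitilde(q_0,y) \;=\; Pr(y) \;=\; \mu(y\Sigma^\omega) \qquad \forall\, y\in\Sigma^\star.
\]
Lemma~\ref{lemPFSAgen} produces a unique probability measure $\mu_G$ on $(\Sigma^\omega,\mathfrak{B})$ with $\mu_G(y\Sigma^\omega)=\pitilde(q_0,y)$. Since $\mu_G$ and $\mu$ agree on the $\pi$-system $\{y\Sigma^\omega : y\in\Sigma^\star\}$ generating $\mathfrak{B}$ and are both probability measures, Dynkin's $\pi$-$\lambda$ theorem forces $\mu_G=\mu$ throughout $\mathfrak{B}$; in other words, the constructed tuple is a PFSA generator of the given probability space.

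I expect the main subtle point to be the strong-connectivity clause of Definition~\ref{defpfsa}, which requires a positive-probability string between every ordered pair of distinct states. The raw Nerode quotient need not satisfy this: transient ``warm-up'' classes may never be re-entered. To deal with it I would restrict $Q$ to its essential (recurrent) communicating class --- strongly connected by a standard finite-state Markov-chain argument applied to the chain induced on $Q$ --- and fold the finitely many passes through the transient part into an effective initial state. Provided $\mu$ is sufficiently regular (in particular, in the stationary-ergodic setting that motivates Definitions~\ref{defQSP}--\ref{defpfsa}) this pruning preserves $\mu(y\Sigma^\omega)$ on every cylinder and delivers an initial-marked PFSA in the sense of Definition~\ref{defpfsa}.
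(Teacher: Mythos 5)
Your construction is exactly the paper's: take $Q=\Sigma^\star/\sim_N$ with $q_0=[\lambda]$, define $\delta([x],\sigma)=[x\sigma]$ and $\pitilde([x],\sigma)=Pr(x\sigma)/Pr(x)$ (well-defined by Definition~\ref{defnerode}), and invoke Lemma~\ref{lemPFSAgen} to identify the generated measure with $\mu$. Your additional care --- the zero-probability class, the $\pi$-$\lambda$ identification of $\mu_G$ with $\mu$, and the strong-connectivity caveat that the paper's proof silently skips --- only tightens the same argument, so the proposal is correct and essentially the paper's proof.
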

\begin{proof}
Let  $Q$ be the set of equivalence classes of the probabilistic Nerode relation (Definition~\ref{defnerode}),  and define functions $\delta:Q \times \Sigma \rightarrow Q$, $\pitilde:Q \times \Sigma \rightarrow [0,1]$ as:
\calign{
& \delta([x],\sigma) = [x\sigma]\\
& \pitilde([x],\sigma) = \frac{Pr(x'\sigma)}{Pr(x')} \textrm{ for any choice of } x' \in [x] \label{eqpit}
}
where we  extend $\delta,\pitilde$  recursively  to   $y=\sigma x \in \Sigma^\star$ as 
\cgather{\delta(q,\sigma x) = \delta(\delta(q,\sigma),x)\\\pitilde(q,\sigma x) = \pitilde(q,\sigma)\pitilde(\delta(q,\sigma),x)}
For verifying the null-word probability, choose a $x \in \Sigma^\star$ such that $[x] = q$ for some $q \in Q$. Then, from  Eq.~\eqref{eqpit}, we have:
\cgather{
 \displaystyle \pitilde(q,\lambda)= \frac{Pr(x'\lambda)}{Pr(x')} \textrm{ for any  } x' \in [x] \Rightarrow \pitilde(q,\lambda)
=  \frac{Pr(x')}{Pr(x')} = 1
}
Finite index of $\sim_{N}$ implies  $\vert Q\vert < \infty$, and hence denoting  $[\lambda]$ as $q_0$, we conclude:  $G=(Q,\Sigma,\delta,\pitilde,q_0)$ is an Initial-Marked PFSA.  Lemma~\ref{lemPFSAgen} implies that $G$ generates $(\Sigma^\omega,\mathfrak{B}, \mu)$,  which completes the proof.
\end{proof}

The above construction yields a \textit{minimal realization} for the Initial-Marked PFSA, unique up to state renaming. 


\begin{lem}[QSP to PFSA]\label{lemQSP2PFSA}
Any QSP  with a finite index Nerode equivalence is generated by an Initial-Marked PFSA.
\end{lem}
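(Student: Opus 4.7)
The plan is to chain together the two preceding structural results so that the QSP passes through its induced probability space and lands at an Initial-Marked PFSA. Given a QSP $\mathcal{H}$, I would first invoke Lemma~\ref{QSPtoProb} to produce the canonical probability space $(\Sigma^\omega, \mathfrak{B}, \mu)$ associated with $\mathcal{H}$, where $\mu$ is determined by the limiting frequencies $Pr(x)$ for $x \in \Sigma^\star$. This step is essentially a direct appeal to the already proved lemma.

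Next, I would apply the probabilistic Nerode relation (Definition~\ref{defnerode}) to this probability space. The hypothesis of the lemma is precisely that the resulting equivalence $\sim_N$ has finite index. Hence the premise of Lemma~\ref{lemPROB2PFSA} is satisfied, and that lemma yields an Initial-Marked PFSA $G = (Q, \Sigma, \delta, \pitilde, q_0)$ whose induced measure $\mu_G$ equals $\mu$. Composing the two constructions gives a PFSA that generates exactly the measure induced by $\mathcal{H}$, which is what the statement requires.

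The only subtlety worth flagging is a consistency check: one must verify that the Nerode equivalence referenced in the hypothesis (\emph{``QSP with a finite index Nerode equivalence''}) is the same object as the Nerode equivalence defined on the induced probability space used in Lemma~\ref{lemPROB2PFSA}. Since Definition~\ref{defnerode} defines $\sim_N$ entirely through the cylinder probabilities $Pr(x) = \mu(x\Sigma^\omega)$, and these are exactly the quantities constructed from $\mathcal{H}$ in Lemma~\ref{QSPtoProb}, the two notions coincide, so no additional work is needed. I do not anticipate a main obstacle here; the lemma is a short composition, and all the technical content has been discharged in the two preceding lemmas.
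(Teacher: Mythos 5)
Your proposal is correct and matches the paper's own proof, which simply composes Lemma~\ref{QSPtoProb} with Lemma~\ref{lemPROB2PFSA}; your extra remark that the Nerode relation in the hypothesis is the one induced by the measure from Lemma~\ref{QSPtoProb} is a harmless clarification of the same argument.
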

\begin{proof}
Follows immediately from Lemma~\ref{QSPtoProb} (QSP to Probability Space) and Lemma~\ref{lemPROB2PFSA} (Probability Space to PFSA generator).
\end{proof}

\subsection{Canonical Representations}
 We have defined a QSP as both ergodic and stationary, whereas the Initial-Marked PFSAs have a designated initial state. Next we introduce  canonical representations to remove initial-state dependence. We use $\Pitilde$ to denote the matrix representation of  $\pitilde$, $i.e.$, $\Pitilde_{ij} = \pitilde(q_i,\sigma_j)$,  $q_i \in Q, \sigma_j \in \Sigma$. We need the notion of transformation matrices $\Gamma_\sigma$.

\begin{defn}[Transformation Matrices]\label{defGamma}
 For an initial-marked PFSA $G=(Q,\Sigma,\delta,\pitilde,q_0)$, the symbol-specific transformation matrices $\Gamma_\sigma \in \{0,1\}^{\vert Q \vert \times \vert Q \vert}$ are:
\cgather{
\Gamma_\sigma \big \vert_{ij} = \begin{cases}
                                 \pitilde(q_i,\sigma), & \textrm{if } \delta(q_i,\sigma) = q_j \\
				 0, & \textrm{otherwise}
                                \end{cases}
}
\end{defn}
Transformation matrices have a single non-zero entry per row, reflecting our generation rule that given a state and a generated symbol, the next state is fixed. 

First, we note that, given an initial-marked PFSA $G$,  we can associate a probability distribution $\wp_x$ over the states of $G$  for each $x \in \Sigma^\star$ in the following sense:
if $x=\sigma_{r_1}\cdots \sigma_{r_m} \in \Sigma^\star$, then we have:
\cgather{
\wp_x = \wp_{\sigma_{r_1}\cdots \sigma_{r_m}} = \underbrace{\frac{1}{\vert \vert \wp_\lambda \prod_{j=1}^m \Gamma_{\sigma_{r_j}}  \vert \vert_1 }}_{\textrm{Normalizing factor}}\wp_\lambda \prod_{j=1}^m \Gamma_{\sigma_{r_j}}
}
where $\wp_\lambda$ is the stationary distribution over the states of $G$.  Note that there may exist more than one string that leads to a distribution $\wp_x$, beginning from the stationary distribution $\wp_\lambda$. Thus, $\wp_x$   is an equivalence class of strings, $i.e.$, $x$ is  not unique. 
\begin{defn}[Canonical Representation]\label{defcanon}

 An  initial-marked PFSA $G=(Q,\Sigma,\delta,\pitilde,q_0)$ uniquely induces a canonical representation $(Q^C,\Sigma,\delta^C,\pitilde^C)$, where $Q^C$ is  a subset of the  set of probability distributions over  $Q$, and  $\delta^C: Q^C \times \Sigma \rightarrow Q^C$,  $\pitilde^C: Q^C \times \Sigma \rightarrow [0,1]$ are constructed  as follows:
\begin{enumerate}
 \item Construct the stationary distribution on $Q$ using the transition probabilities of the Markov Chain induced by $G$, and include this as the first element $\wp_\lambda$ of $Q^C$. Note that the transition matrix for $G$ is the row-stochastic matrix $M \in [0,1]^{\vert Q \vert \times \vert Q \vert}$, with
$
M_{ij} = \sum_{\sigma: \delta(q_i,\sigma)=q_j}\pitilde(q_i,\sigma)
$, and hence $\wp_\lambda$ satisfies:
\cgather{
\wp_\lambda M = \wp_\lambda
}
\item Define  $\delta^C$ and $\pitilde^C$ recursively:
\calign{
&\delta^C(\wp_x, \sigma) = \frac{1}{\vert \vert \wp_x \Gamma_\sigma \vert \vert_1}\wp_x \Gamma_\sigma \triangleq \wp_{x\sigma}\\
&\pitilde^C(\wp_x,\sigma) = \wp_x \Pitilde
}
\end{enumerate}
\end{defn}
For a QSP $\mathcal{H}$, the canonical representation is denoted as $\mathcal{C}_\mathcal{H}$.
\begin{lem}[Properties of Canonical Representation]\label{lemcanonstruc}
 Given an initial-marked PFSA $G=(Q,\Sigma,\delta,\pitilde,q_0)$:
\begin{enumerate}
\item The canonical representation is independent of the initial state.
\item The canonical representation $(Q^C,\Sigma,\delta^C,\pitilde^C)$ contains a copy of $G$ in the sense that there exists a set of states $Q' \subset Q^C$, such that there exists a one-to-one map $\zeta:Q \rightarrow Q'$, with:
\cgather{\forall q \in Q, \forall \sigma \in \Sigma, \left \{ \begin{array}{l}
\pitilde(q,\sigma) = \pitilde^C(\zeta(q),\sigma) \\
\delta(q,\sigma) = \delta^C(\zeta(q),\sigma)\end{array}\right.
}
\item If during the  construction (beginning with $\wp_\lambda$) we  encounter $\wp_x = \zeta(q)$ for some $x \in \Sigma^\star$,  $q \in Q$ and any map $\zeta$ as defined in (2), then we stay within the graph of the copy of the  initial-marked PFSA  for all right extensions of $x$.
\end{enumerate}

\end{lem}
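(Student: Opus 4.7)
The plan is to exploit the fact that the canonical construction is assembled entirely from ingredients that know nothing about $q_0$, and to use the single-nonzero-per-row structure of the $\Gamma_\sigma$'s to produce a natural Dirac embedding of $G$. The guiding idea for (2) and (3) is that point masses (Dirac distributions) are preserved under the $\Gamma_\sigma$ action after $\ell_1$-normalization, so the states of $G$ sit inside $Q^C$ as the unit coordinate vectors $\vect{e}_1,\ldots,\vect{e}_{\vert Q \vert}$.

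For (1) I would read off the recipe in Definition~\ref{defcanon}: the seed $\wp_\lambda$ is the (unique, by strong connectivity) left stationary eigenvector of the Markov matrix $M_{ij}=\sum_{\sigma:\delta(q_i,\sigma)=q_j}\pitilde(q_i,\sigma)$, and the recursions defining $\delta^C$ and $\pitilde^C$ call only on $\wp_\lambda$, the $\Gamma_\sigma$'s, and $\Pitilde$. Since each of these objects is a function of the tuple $(Q,\Sigma,\delta,\pitilde)$ alone, the canonical representation is invariant under any relabeling of the initial state $q_0$. For (2) the candidate embedding is $\zeta(q_i) = \vect{e}_i$. The transformation matrix $\Gamma_\sigma$ has, in its $i$-th row, the single entry $\pitilde(q_i,\sigma)$ in column $\delta(q_i,\sigma)$, so a one-line computation gives $\vect{e}_i \Gamma_\sigma = \pitilde(q_i,\sigma)\,\vect{e}_{\delta(q_i,\sigma)}$, and normalizing yields $\delta^C(\vect{e}_i,\sigma)=\vect{e}_{\delta(q_i,\sigma)} = \zeta(\delta(q_i,\sigma))$. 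Similarly $\pitilde^C(\vect{e}_i,\sigma) = (\vect{e}_i\Pitilde)_\sigma = \Pitilde_{i\sigma} = \pitilde(q_i,\sigma)$. Injectivity of $\zeta$ is immediate since distinct Dirac masses are distinct distributions, and thus $Q' := \{\vect{e}_i : q_i \in Q\}$ is the desired copy of $G$ inside $Q^C$.

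Claim (3) then falls out of the same calculation by induction on the length of the right extension $y \in \Sigma^\star$: if $\wp_x = \zeta(q_i)=\vect{e}_i$, then $\wp_{x\sigma} = \vect{e}_{\delta(q_i,\sigma)}$ is again a Dirac mass and hence in $Q'$; iterating over the letters of $y$ keeps the orbit inside $Q'$. The one real obstacle — and the step I would want to spell out carefully — is justifying that the Dirac masses $\vect{e}_i$ genuinely belong to $Q^C$ in the first place; starting from the mixed stationary distribution $\wp_\lambda$, the forward orbit $\{\wp_x : x\in\Sigma^\star\}$ need not meet any Dirac mass. I would handle this by reading Definition~\ref{defcanon} as specifying $Q^C$ to be the smallest $\{\Gamma_\sigma\}$-invariant set of distributions containing $\wp_\lambda$ together with whatever is needed to support the canonical dynamics; since $Q'$ is $\{\Gamma_\sigma\}$-invariant by the point-mass calculation above and is finite (size $\vert Q \vert$), adjoining it costs nothing and yields a well-defined canonical representation in which (2) and (3) hold simultaneously.
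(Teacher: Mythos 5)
Your proposal is correct and takes essentially the same route as the paper: the paper also proves (2)--(3) by viewing the states of $G$ as the degenerate (Dirac) distributions $e^i$ over $Q$, with (1) following because the stationary distribution $\wp_\lambda$ and the matrices $\Gamma_\sigma$, $\Pitilde$ are determined by $(Q,\Sigma,\delta,\pitilde)$ alone. You simply spell out what the paper leaves implicit --- the explicit computation $e^i\Gamma_\sigma = \pitilde(q_i,\sigma)\,e^{\delta(q_i,\sigma)}$, the induction for (3), and the reading of $Q^C$ under which the point masses are admitted as canonical states --- which is a faithful elaboration rather than a different argument.
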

\begin{proof}
(1) follows  the ergodicity of QSPs, which makes $\wp_\lambda$ independent of the initial state in the initial-marked PFSA.

(2) The canonical representation  subsumes the initial-marked representation in the  sense that the states of the latter may themselves be seen as degenerate distributions over $Q$, $i.e.$, by letting 
\cgather{\label{eqE}
\mathcal{E}=\big \{ e^i \in [0 \ 1]^{\vert Q \vert } , i= 1,\cdots, \vert Q \vert \big \}
}
  denote  the set of distributions satisfying:
\cgather{
e^i\vert_j = \begin{cases}
              1, & \textrm{if } i=j\\
              0, & \textrm{otherwise}
             \end{cases}
}
(3) follows from the strong connectivity of $G$.
\end{proof}



Lemma~\ref{lemcanonstruc} implies that initial states are unimportant;   we may denote the initial-marked PFSA induced by a QSP $\mathcal{H}$, with the initial marking removed, as $\mathcal{P}_\mathcal{H}$, and refer to it simply as a ``PFSA''.  States in $\mathcal{P}_\mathcal{H}$ are representable as  states in $\mathcal{C}_\mathcal{H}$ as elements of $\mathcal{E}$. Next we show that  we always  encounter a state arbitrarily close to some element in  $\mathcal{E}$ (See Eq.~\eqref{eqE}) in the canonical construction starting from the stationary distribution $\wp_\lambda$ on the states of $\mathcal{P}_\mathcal{H}$.
%

Next we introduce the notion of $\epsilon$-synchronization of probabilistic automata (See Figure~\ref{figsync}), which would be of fundamental importance to our entropy estimation algorithm in the next section. Synchronization of automata is fixing or determining the current state; thus it is analogous to contexts in Rissanen's ``context algorithm''~\cite{rissanen83}. We show that not all PFSAs are synchronizable, but all are $\epsilon$-synchronizable.
\begin{figure}[t]
\tikzexternaldisable
\centering 
\newcommand{\Vdist}{.5in}
\def\COL{black!50}
\def\COLA{Red4!20!black}
\def\COLB{Red4!20!black}
\def\COLC{Red4!20!black}
\def\GCOL{gray!0}

\definecolor{blcol}{RGB}{150,150,250}
\definecolor{murcol}{RGB}{255,150,150}
\definecolor{healcol}{RGB}{150,235,150}
\definecolor{colzeta}{RGB}{250,250,200}
\definecolor{leftpcol}{RGB}{200,255,200}
\definecolor{rightpcol}{RGB}{200,200,255}
\definecolor{boxcol}{RGB}{210,200,200}
\definecolor{linecol}{RGB}{200,180,180}
\definecolor{colplus}{RGB}{235,220,220}
\definecolor{colinv}{RGB}{200,220,255}
\definecolor{cof}{RGB}{219,144,71}
\definecolor{pur}{RGB}{200,200,200}
\definecolor{greeo}{RGB}{91,173,69}
\definecolor{greet}{RGB}{52,111,72}
 \definecolor{nodecol}{RGB}{180,180,220}
 \definecolor{nodeedge}{RGB}{140,148,155}
 \definecolor{nodeedgeb}{RGB}{240,148,155}
  \definecolor{nodecolb}{RGB}{220,180,180}
  \definecolor{nodecolc}{RGB}{180,220,180}
  \definecolor{nodecolcD}{RGB}{100,160,100}
  \definecolor{nodecolW}{RGB}{190,190,190}
  \definecolor{edgecol}{RGB}{60,60,80}
 \definecolor{nodecolD}{RGB}{140,140,180}
  \definecolor{nodecolbD}{RGB}{180,140,140}
\tikzset{srule/.style={ semithick, opacity=.8, Red1, text opacity=1, text=gray}}
\tikzset{axstyle/.style={ black, opacity=1,  thick, rounded corners=0pt}}
\tikzset{%
fshadow/.style={      preaction={
         fill=black,opacity=.2,
         path fading=circle with fuzzy edge 20 percent,
         transform canvas={xshift=1mm,yshift=-1mm}
       }}
}
\tikzset{oplus/.style={path picture={%
      \draw[black]
       (path picture bounding box.south) -- (path picture bounding box.north) 
       (path picture bounding box.west) -- (path picture bounding box.east);
      }}} 
\tikzset{%
  highlight/.style={draw=Red4,rectangle,rounded corners=1pt, opacity=1,fill=gray!10,thick,inner sep=-1.25pt,on background layer,fill opacity=.3}
}
\tikzset{%
  highlightg/.style={draw=DodgerBlue4,rectangle,rounded corners=1pt, opacity=1,fill=gray!10,thick,inner sep=-1.25pt,on background layer,fill opacity=.3}
}
%
\def\SCALE{1.5}
\newcommand{\Gone}{%
\begin{tikzpicture}[->,>=stealth',shorten >=1pt,auto,node distance=1.2cm,
                    semithick,scale=\SCALE,font=\bf\fontsize{6}{6}\selectfont]
  \tikzstyle{every state}=[fill=nodecol,draw=nodeedge,text=black,minimum size=3, text width=2,scale=\SCALE,fshadow]
  \node[state] (A)         []           {$\mspace{-6mu}q_0$};
  \node[state]         (B) [right of=A] {$\mspace{-6mu}q_1$};
  \path (A) edge   [draw=edgecol,bend left]           node {$\sigma_1\vert 0.15$} (B)
        (A) edge [draw=edgecol,in=120,out=60,loop,left] node [xshift=-.1in,yshift=-.1in]{$\sigma_0\vert 0.85$} (A)
        (B) edge [draw=edgecol,in=60,out=120,loop,right] node [xshift=.1in,yshift=-.1in]{$\sigma_0\vert 0.25$} (B)
            edge   [draw=edgecol,bend left]           node {$\sigma_1\vert 0.75$} (A);
\end{tikzpicture}
}
\newcommand{\Gtwo}{%
\begin{tikzpicture}[->,>=stealth',shorten >=1pt,auto,node distance=1.2cm,
                    semithick,scale=\SCALE,font=\bf\fontsize{6}{6}\selectfont]
  \tikzstyle{every state}=[fill=nodecolb,draw=nodeedgeb,text=black,minimum size=3, text width=2,scale=\SCALE,fshadow]
  \node[state] (A)         []           {$\mspace{-6mu}q_0$};
  \node[state]         (B) [right of=A] {$\mspace{-6mu}q_1$};
  \path (A) edge   [draw=edgecol,bend left]           node {$\sigma_1\vert 0.15$} (B)
        (A) edge [draw=edgecol,in=120,out=60,loop,left] node [xshift=-.1in,yshift=-.1in]{$\sigma_0\vert 0.85$} (A)
        (B) edge [draw=edgecol,in=60,out=120,loop,right] node [xshift=.1in,yshift=-.1in]{$\sigma_1\vert 0.75$} (B)
            edge   [draw=edgecol,bend left]           node {$\sigma_0\vert 0.25$} (A);
\end{tikzpicture}
}

\begin{tikzpicture}[scale=1.5,font=\bf \sffamily \fontsize{6}{6}\selectfont]

\node [] (A)  {\Gtwo};
\node [] (aa) at (A.south)  {Synchronizable};

\node [anchor=west] (B) at ([xshift=.05in]A.east)  {\Gone};

\node [] (bb) at (B.south)  {Non-synchronizable};

\end{tikzpicture}
\vspace{-5pt}

\captionN{{\bf Synchronizable and non-synchronizable machines.} Identifying contexts is a key step in estimating the entropy rate of stochastic signals sources; and for PFSA generators, this translates to a state-synchronization problem. However, not all PFSAs are synchronizable, $e.g.$, while the top machine is synchronizable, the bottom one is not. Note that  a history of just one symbol suffices to determine the current state in the synchronizable machine (top), while no finite history can do the same in the non-synchronizable machine (bottom). However, we show that a $\epsilon$-synchronizable string always exists (Theorem~\ref{thmepssynchro}).
}\label{figsync} 
\vspace{-15pt}

\end{figure}

\begin{thm}[$\epsilon$-Synchronization of Probabilistic Automata]\label{thmepssynchro}
 For any QSP $\mathcal{H}$ over  $\Sigma$, the PFSA  $\mathcal{P}_\mathcal{H}$ satisfies: 
\cgather{
\forall \epsilon' > 0, \exists x \in \Sigma^\star, \exists\bvec  \in \mathcal{E},  \vert \vert \wp_x -\bvec \vert \vert_\infty \leqq \epsilon'\label{eqsync}
}
\end{thm}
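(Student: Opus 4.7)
My plan is to realize $\wp_x$ as a Bayesian filter for the hidden state of $\mathcal{P}_\mathcal{H}$ and exploit probabilistic Nerode distinguishability to force that filter distribution arbitrarily close to $\mathcal{E}$. I would run $\mathcal{P}_\mathcal{H}$ at stationarity with $S_0 \sim \wp_\lambda$, so the emitted sequence $X_1 X_2 \cdots$ is distributed as $\mathcal{H}$. A one-line induction on $|x|$, using the recursion $\wp_{y\sigma} = \wp_y \Gamma_\sigma / \|\wp_y \Gamma_\sigma\|_1$ together with Bayes' rule, gives the identification $\wp_x(q) = \Pr(S_{|x|} = q \mid X_1 \cdots X_{|x|} = x)$, so $\wp_x$ is precisely the HMM filter posterior.

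Because $\delta : Q \times \Sigma \to Q$ is a function, $S_n$ is deterministically recoverable from $(S_0, X_1, \ldots, X_n)$; hence by the data-processing inequality $H(\wp_x) = H(S_n \mid X_1, \ldots, X_n) \leqq H(S_0 \mid X_1, \ldots, X_n)$. The crux is then to show $H(S_0 \mid X_1, \ldots, X_n) \to 0$ as $n \to \infty$. Since the states of $\mathcal{P}_\mathcal{H}$ are distinct probabilistic Nerode classes, any two initial states induce distinct laws on $\Sigma^\omega$; combined with ergodicity and the strong connectivity of $\mathcal{P}_\mathcal{H}$ (so distinguishing suffixes recur along typical paths), the log-likelihood ratio between laws associated to distinct initial states diverges almost surely, driving the posterior on $S_0$ onto the truth and forcing the conditional entropy to zero.

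Finally, I would convert vanishing filter entropy into the theorem. By Markov's inequality, for any $\delta > 0$ and $n$ large enough there is positive probability that $H(\wp_{X_1 \cdots X_n}) < \delta$; and since the entropy functional on the simplex over $Q$ vanishes precisely on $\mathcal{E}$ and is continuous on a compact domain, a Fano-type estimate yields $\min_{\bvec \in \mathcal{E}} \|\wp - \bvec\|_\infty \leqq \epsilon'$ whenever $\delta$ is chosen sufficiently small relative to $\epsilon'$. Any realization $x$ of sufficient length witnessing this event supplies the required string and basis vector $\bvec$.

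The step I expect to be hardest is the third: Nerode distinguishability is only pairwise and qualitative, while I need quantitative, uniform almost-sure divergence of all pairwise log-likelihood ratios. My fallback is to invoke strong connectivity (Definition~\ref{defpfsa}) to extract, for each pair of distinct states $q_i, q_j$, a finite distinguishing suffix whose recurrences accumulate a strictly positive expected relative-entropy contribution per visit; the Birkhoff ergodic theorem then promotes this to the required almost-sure divergence simultaneously for all finitely many pairs, closing the argument.
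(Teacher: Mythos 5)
Your reduction of the theorem to ``$H(S_0 \mid X_1,\ldots,X_n) \to 0$'' is where the argument breaks. Minimality of $\mathcal{P}_\mathcal{H}$ (distinct probabilistic Nerode classes) only gives that the laws $\mu_{q_i}$ on $\Sigma^\omega$ induced by distinct initial states are \emph{distinct}; your step needs them to be pairwise \emph{mutually singular}, which is strictly stronger and false in general. The obstruction is coalescence of the shadow chains: since $\delta$ is deterministic, two copies started at $q_i \neq q_j$ and fed the same symbols occupy states $\delta(q_i,x_{1:t})$ and $\delta(q_j,x_{1:t})$, and if these merge at some finite time (an event of positive probability whenever the underlying graph admits a merging word --- e.g.\ the \emph{synchronizable} two-state machine of Figure~2, where a single $\sigma_0$ merges both states), the log-likelihood ratio is frozen thereafter at a finite nonzero value. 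On that event the posterior over $S_0$ never concentrates, so $H(S_0 \mid X_{1:n})$ stays bounded away from zero and Markov's inequality gives you nothing. Your fallback does not repair this: the per-recurrence positive KL contribution from a distinguishing suffix accrues only while the two shadow chains are in \emph{distinct} states, so Birkhoff's theorem cannot force divergence on the coalescence event. Note the irony that the failure is worst exactly in the easy (synchronizable) case, which signals that identifying $S_0$ is the wrong target: coalescence destroys information about $S_0$ but is precisely what pins down $S_n$, and the inequality $H(S_n \mid X_{1:n}) \leqq H(S_0 \mid X_{1:n})$, while valid, is too lossy to transfer the conclusion you need.

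A correct argument has to handle the two mechanisms jointly --- coalescence of state copies, and likelihood-ratio tilting while copies remain distinct --- and this is essentially what the paper's proof does, purely constructively and without any almost-sure statement: for each pair of distinct states it extracts from the Nerode relation a string $x^{ij}$ along which the two conditional word-probabilities differ, iterates it to push the filter mass on the pair almost entirely onto one member, uses strong connectivity to route that mass to a fixed target state, and then chains such strings over the $\lfloor \vert Q \vert /2 \rfloor$ pairs so that the final distribution $\wp_x$ places mass at least $1 - \tfrac{1}{2} n \epsilon'$ on one state. If you want to salvage your filtering route, you would need to prove a statement of the form ``for every $\epsilon'$ there is a positive-probability cylinder on which either all shadow chains have coalesced or the pairwise likelihood ratios are already $\epsilon'$-extreme,'' which is in substance the paper's pairwise construction restated; your remaining steps (the identification of $\wp_x$ with the stationary filter, the deterministic-transition entropy inequality, and the passage from small filter entropy to $\ell^\infty$-proximity to a vertex of $\mathcal{E}$) are fine but do not touch this gap.
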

%

\begin{proof}
 We show that all PFSA are at least approximately synchronizable~\cite{BICP99,Ito84}, which is not true for deterministic automata. If the graph of $\mathcal{P}_\mathcal{H}$ ($i.e.$, the deterministic automaton obtained by removing the arc probabilities) is synchronizable, then Eq.~\eqref{eqsync} trivially holds true for $\epsilon' = 0$ for any synchronizing string $x$. Thus, we assume   the graph of $\mathcal{P}_\mathcal{H}$ to be non-synchronizable. From definition of non-synchronizability, it follows:
\cgather{
\forall q_i,q_j \in Q, \textrm{with } q_i \neq q_j,  \forall x \in \Sigma^\star, \delta(q_i,x) \neq \delta(q_j,x)
}
If the PFSA has a single state, then every string satisfies the condition in Eq.~\eqref{eqsync}. Hence, we assume that the PFSA has more than one state.
Now if we have:
\cgather{
\forall x \in \Sigma^\star, \frac{Pr(x'x)}{Pr(x')} = \frac{Pr(x''x)}{Pr(x'')} \textrm{ where } [x']= q_i, [x'']=q_j
}
then, by the Definition~\ref{defnerode} , we have a contradiction $q_i =q_j$. Hence  $\exists x_0$ such that 
\cgather{
\frac{Pr(x'x_0)}{Pr(x')} \neq \frac{Pr(x''x_0)}{Pr(x'')} \textrm{ where } [x']= q_i, [x'']=q_j \\
\mathrm{Since: } 
\sum_{x \in \Sigma^\star} \frac{Pr(x'x)}{Pr(x')}=1, \textrm{ for any } x' \textrm{ where }  [x']= q_i
}
we conclude without loss of generality  $\forall q_i, q_j \in Q$, with $q_i \neq q_j$:
\cgathers{
\exists x^{ij} \in \Sigma^\star, \frac{Pr(x'x^{ij})}{Pr(x')} > \frac{Pr(x''x^{ij})}{Pr(x'')} \textrm{ where } [x']= q_i, [x'']=q_j
}
It follows from induction that if we start with a distribution $\wp$ on $Q$ such that $\wp_i = \wp_j = 0.5$, then for any $\epsilon' > 0$ we can construct a finite string $x^{ij}_0$ such that if $\delta(q_i,x^{ij}_0) = q_r, \delta(q_j,x^{ij}_0) = q_s$, then for the new distribution $\wp'$ after execution of $x^{ij}_0$ will satisfy
$\wp_s' > 1-\epsilon'$. Recalling that $\mathcal{P}_\mathcal{H}$ is strongly connected, we note that,  for any $q_t \in Q$, there exists a string $y \in \Sigma^\star$, such that $\delta(q_s,y) = q_t$. Setting $x^{i,j\rightarrow t}_\star = x^{ij}_0 y$, we can ensure that the distribution $\wp''$ obtained after execution of $x^{ij}_\star$ satisfies $\wp_t'' > 1-\epsilon'$ for any $q_t$ of our choice. For arbitrary initial distributions $\wp^A$ on $Q$, we must consider contributions arising from simultaneously executing $x^{i,j\rightarrow t}_\star$ from states other than just $q_i$ and $q_j$. Nevertheless, it is easy to see that executing $x^{i,j\rightarrow t}_\star$ implies that in the new distribution $\wp^{A'}$, we have
$ \wp^{A'}_t > \wp^A_i +\wp^A_j -\epsilon'$. It  follows that executing  the string $x^{1,2\rightarrow \vert Q\vert}x^{3,4\rightarrow \vert Q\vert} \cdots x^{n-1,n\rightarrow \vert Q\vert}$, where
\cgather{
n = \begin{cases}
  \vert Q\vert & \textrm{if $\vert Q\vert$ is even}\\
\vert Q \vert -1 & \textrm{otherwise}   
    \end{cases}
}
would result in a final distribution $\wp^{A''}$ which satisfies $\wp^{A''}_{\vert Q \vert } > 1 - \frac{1}{2}n \epsilon'$.
Appropriate scaling of $\epsilon'$ then completes the proof.
\end{proof}
Theorem~\ref{thmepssynchro} induces the notion of  $\epsilon$-synchronizing strings, and guarantees their existence for arbitrary PFSA.
\begin{defn}[$\epsilon$-synchronizing Strings]\label{defepsilonsynchro}
  A  string $x\in \Sigma^\star$ is $\epsilon$-synchronizing for a PFSA if:
\cgather{
\exists\bvec  \in \mathcal{E}, \vert \vert \wp_x -\bvec  \vert \vert_\infty \leqq \epsilon
}
\end{defn}

Theorem~\ref{thmepssynchro} is an existential result, and  does not yield an algorithm for computing synchronizing strings (See Theorem~\ref{thmderivheap}). 
We may  estimate an asymptotic upper bound on such a search.

\begin{cor}[To Theorem~\ref{thmepssynchro}]\label{corsynchrodepth}
At most $O(1/\epsilon)$ strings from the lexicographically ordered set of all strings over the given alphabet need to be analyzed to find an $\epsilon$-synchronizing string.
\end{cor}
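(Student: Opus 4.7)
The plan is to convert the existence statement in Theorem~\ref{thmepssynchro} into an explicit length bound for an $\epsilon$-synchronizing string, and then to count how many strings lexicographically precede it.

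First, I would quantify the induction sketched in the proof of Theorem~\ref{thmepssynchro}. For each pair of distinct states $q_i, q_j \in Q$, the probabilistic Nerode relation guarantees a discriminator $x^{ij}\in\Sigma^\star$ for which the conditional post-distributions at $q_i$ and $q_j$ differ by some strictly positive gap $\eta_{ij}$. Since $Q$ is finite, $\eta = \min_{i\neq j}\eta_{ij}$ is a strictly positive constant depending only on the PFSA. Iterating the discriminator then contracts the confusion between the two states geometrically at rate $(1-\eta)$, so that $O(\log(1/\epsilon))$ repetitions suffice to concentrate the state distribution within $\epsilon$ of a vertex of $\mathcal{E}$. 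Concatenating these discriminators over all $O(|Q|)$ ordered pairs as in the theorem (with the constant rescaling by $n/2$ already incorporated) yields an $\epsilon$-synchronizing string of total length $L(\epsilon) = C_1\log_{|\Sigma|}(1/\epsilon) + C_2$, where $C_1$ and $C_2$ depend only on the PFSA and the alphabet, not on $\epsilon$.

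Second, the lexicographic enumeration of $\Sigma^\star$ visits every string of length at most $L$ before reaching any string of length $L+1$, so the total number of strings examined before encountering the synchronizing string constructed above is at most $\sum_{i=0}^{L(\epsilon)} |\Sigma|^i \leqq |\Sigma|^{L(\epsilon)+1}/(|\Sigma|-1)$. Substituting the formula for $L(\epsilon)$ from the first step and absorbing PFSA- and alphabet-dependent constants into the $O$-notation yields a count of order $O(1/\epsilon)$ in the fixed-alphabet, fixed-source regime assumed by the corollary.

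The main obstacle is the first step: making the informal induction of Theorem~\ref{thmepssynchro} quantitatively rigorous. In particular, one must (i) confirm that the pairwise gaps $\eta_{ij}$ are uniformly bounded below, which follows from the finiteness of $Q$ and Definition~\ref{defnerode}, and (ii) verify that applying a discriminator for one pair of states does not destroy the concentration already achieved on earlier pairs beyond the slack built into the $n/2$ rescaling used in the theorem. Once these two points are formalized, the logarithmic length bound and the subsequent counting of lexicographic predecessors are routine.
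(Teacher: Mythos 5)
Your argument is correct and takes essentially the same route as the paper's own proof: both quantify the construction of Theorem~\ref{thmepssynchro} via a minimum gap $\eta>0$ to get a synchronizing string of length $O(\log(1/\epsilon))$, and then bound the number of candidate strings by $\vert\Sigma\vert^{O(\log(1/\epsilon))}=O(1/\epsilon)$. Your geometric-sum count over all lengths up to $L(\epsilon)$ is a marginally more careful version of the paper's ``all strings of length $\ell$'' count, and your final step of absorbing the constant $C_1$ into the $O$-notation shares the same looseness already present in the paper's proof (a constant in the exponent would in general yield $O(\epsilon^{-C_1})$ rather than $O(1/\epsilon)$).
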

\begin{proof}
 Theorem~\ref{thmepssynchro} works by  multiplying entries from the $\Pitilde$ matrix, which cannot be all identical (otherwise the states would collapse). Let the minimum difference between two unequal entries be $\eta$. Then, following the construction in Theorem~\ref{thmepssynchro},  the length $\ell$ of the synchronizing string, up to linear scaling, satisfies: $\eta^\ell = O(\epsilon)$, implying $\ell = O(log(1/\epsilon)$. Hence, the number of strings to be analyzed  is at most all strings of length $\ell$,  where
$ \vert \Sigma\vert^\ell = \vert \Sigma\vert^{O(log(1/\epsilon)} = O(1/\epsilon)$.
\end{proof}
%
\subsection{Symbolic Derivatives}
Computation of $\epsilon$-synchronizing strings requires the notion of symbolic derivatives. Note that, PFSA states  are not  observable; we observe  symbols generated from hidden states. A symbolic derivative at a given string  specifies the distribution of the next symbol over the alphabet.

\begin{notn}
We denote the set of probability distributions 
over a finite  set of cardinality $k$ as $\mathscr{D}(k)$. 
\end{notn}

\begin{defn}[Symbolic Count Function]\label{defcount}
 For a string $s$ over  $\Sigma$, the count function $\#^s: \Sigma^\star \rightarrow \mathbb{N}\cup \{0\}$,  counts the number of times a particular substring occurs in $s$. The count is overlapping, $i.e.$, in a string $s=0001$, we count the number of occurrences of $00$s as $\underline{00}01$ and $0\underline{00}1$, implying $\#^s 00 =2$.
\end{defn}

\begin{defn}[Symbolic Derivative]\label{defsymderivative}
 For a string $s$  generated by a QSP over $\Sigma$, the symbolic derivative  $\phi^s:\Sigma^\star \rightarrow \mathscr{D}(\vert \Sigma\vert -1)$ is defined:
\vspace{-5pt}
\cgather{
\phi^s(x) \big \vert_i = \frac{\#^s x\sigma_i}{\sum_{\sigma_i \in \Sigma }\#^s x\sigma_i}
}
Thus,  $\forall x \in \Sigma^\star, \phi^s(x)$ is a probability distribution over $\Sigma$. $\phi^s(x)$ is referred to as the symbolic derivative at $x$.
\end{defn}

Note that  $\forall q_i \in Q$, $\pitilde$  induces a  probability distribution over $\Sigma$ as  $[\pitilde(q_i,\sigma_1), \cdots , \pitilde(q_i,\sigma_{\vert \Sigma \vert})]$. We denote this as $\pitilde(q_i,\cdot)$.

We next show that the symbolic derivative at $x$ can be used to estimate this distribution for $q_i = [x]$, provided $x$ is $\epsilon$-synchronizing.
\begin{thm}[$\epsilon$-Convergence]\label{thmsymderiv} If $x \in \Sigma^\star$ is $\epsilon$-synchronizing, then:
 \cgather{
\forall \epsilon > 0,  \lim_{\vert s \vert \rightarrow \infty}\vert \vert \phi^s(x) -\pitilde([x],\cdot)\vert \vert_\infty \leqq_{a.s} \epsilon\label{eqsync2} 
}
\end{thm}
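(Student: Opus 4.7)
The plan is to combine ergodic convergence of empirical substring counts with the approximate synchronization guaranteed by Definition~\ref{defepsilonsynchro}. First, by ergodicity of the QSP $\mathcal{H}$, for every $y \in \Sigma^\star$ the relative frequency $\#^s y/\vert s \vert$ converges almost surely to $Pr(y)$. Applying this to $y = x\sigma_i$ for each $i$ and dividing by the analogous sum (so that the denominator $\sum_j \#^s x\sigma_j$ matches $\#^s x$ up to a boundary correction that vanishes as $\vert s \vert \rightarrow \infty$), the empirical ratio $\phi^s(x)\vert_i$ converges almost surely to $Pr(x\sigma_i)/Pr(x)$. By the defining construction of $\wp_x$ in the canonical representation, starting from $\wp_\lambda$ and applying the operators $\Gamma_{\sigma_{r_j}}$ for the letters of $x$ yields $\wp_x$, and the probability of the next symbol being $\sigma_i$ is exactly $(\wp_x \Pitilde)_i$; hence $\phi^s(x) \xrightarrow{a.s.} \wp_x \Pitilde$.

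Second, the $\epsilon$-synchronization hypothesis supplies $\bvec = e^k \in \mathcal{E}$ with $\Vert \wp_x - \bvec \Vert_\infty \leqq \epsilon$. Since $\bvec$ is a vertex of the probability simplex, we have $\wp_x\vert_k \geqq 1 - \epsilon$, forcing the off-peak mass $\sum_{j\neq k} \wp_x\vert_j \leqq \epsilon$ and therefore $\Vert \wp_x - \bvec\Vert_1 \leqq 2\epsilon$. Because every entry of $\Pitilde$ lies in $[0,1]$, the elementary bound $\Vert v\Pitilde\Vert_\infty \leqq \Vert v\Vert_1$ gives $\Vert \wp_x \Pitilde - \bvec \Pitilde\Vert_\infty \leqq 2\epsilon$. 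Via the embedding of $\mathcal{P}_\mathcal{H}$ into its canonical representation (Lemma~\ref{lemcanonstruc}(2)), $\bvec \Pitilde = e^k \Pitilde$ is precisely the row $\pitilde(q_k, \cdot) = \pitilde([x],\cdot)$. Combining Step~1 with this inequality via the triangle inequality yields $\limsup_{\vert s\vert\to\infty}\Vert \phi^s(x) - \pitilde([x],\cdot)\Vert_\infty \leqq 2\epsilon$ almost surely; the harmless factor of $2$ is absorbed by rescaling the synchronization parameter, giving the stated bound.

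The main obstacle is the conceptual one of correctly identifying the Nerode class $[x]$ with the state $q_k$ singled out by the near-vertex $\bvec$. This relies on Lemma~\ref{lemcanonstruc}(3): once the canonical trajectory reaches a distribution arbitrarily close to a pure state, subsequent dynamics track the embedded copy of $\mathcal{P}_\mathcal{H}$, so the probabilistic Nerode class of $x$ and the $\mathcal{E}$-vertex that $\wp_x$ approximates are consistent. A secondary subtlety is the $\ell^\infty \to \ell^1$ conversion used in Step~2, which would fail for generic vectors but succeeds here precisely because $\bvec$ is extremal in the simplex and $\wp_x$ is a probability distribution. The remaining pieces---the almost-sure convergence of substring frequencies, the matrix-norm inequality, and the triangle inequality---are routine once these two structural observations are in hand.
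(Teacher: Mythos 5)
Your overall strategy is the same as the paper's: split the error into an empirical-versus-limiting-distribution term and a synchronization term, identify the limiting next-symbol distribution at $x$ with $\wp_x\Pitilde$, and bound the second term using $\Vert \wp_x - \bvec\Vert_\infty \leqq \epsilon$. The paper invokes Glivenko--Cantelli for the first term where you invoke ergodic convergence of overlapping substring frequencies; these are interchangeable here, and your explicit identification of the limit as $\wp_x\Pitilde$ is exactly the paper's $A_\alpha$ written out.

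The one genuine defect is in your Step 2. Your $\ell^1$/matrix-norm route yields $\Vert \wp_x\Pitilde - \bvec\,\Pitilde\Vert_\infty \leqq 2\epsilon$, and the closing remark that the factor of $2$ is ``absorbed by rescaling the synchronization parameter'' is not available: the theorem asserts the bound $\epsilon$ for the \emph{same} $\epsilon$ that appears in the hypothesis that $x$ is $\epsilon$-synchronizing, so you cannot trade it away without changing the statement. The repair is immediate and is what the paper does: write $\wp_x = \alpha\bvec + (1-\alpha)u$ with $\alpha \geqq 1-\epsilon$ and $u$ a probability vector; then $\wp_x\Pitilde - \pitilde(q_k,\cdot) = (1-\alpha)\bigl(u\Pitilde - \pitilde(q_k,\cdot)\bigr)$, and since $u\Pitilde$ and $\pitilde(q_k,\cdot)$ are both probability distributions over $\Sigma$, their $\infty$-norm difference is at most $1$, giving the bound $1-\alpha \leqq \epsilon$ rather than $2\epsilon$. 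With that substitution your argument matches the paper's. (Your concern about identifying $[x]$ with the state $q_k$ selected by $\bvec$ is reasonable, but the paper makes the same identification implicitly, so you are no worse off on that point.)
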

\begin{proof}
 We use the Glivenko-Cantelli theorem~\cite{Fl70} on  uniform convergence of empirical distributions. Since $x$ is $\epsilon$-synchronizing:
\cgather{
\forall \epsilon > 0, \exists\bvec  \in \mathcal{E}, \vert \vert \wp_x -\bvec  \vert \vert_\infty \leqq \epsilon
}
Recall that $\mathcal{E}=\big \{ e^i \in [0 \ 1]^{\vert Q \vert } , i= 1,\cdots, \vert Q \vert \big \}$  denotes  the set of distributions over $Q$ satisfying:
\cgather{
e^i\vert_j = \begin{cases}
              1, & \textrm{if } i=j\\
              0, & \textrm{otherwise}
             \end{cases}
}
Let  $x$ $\epsilon$-synchronize to $q \in Q$. Thus, when we encounter $x$ while 
reading  $s$, we are guaranteed to be distributed over  $Q$ as $\wp_x$, where:
\cgather{
 \vert \vert \wp_x -\bvec  \vert \vert_\infty \leqq \epsilon
\Rightarrow \wp_x = \alpha \bvec +(1-\alpha) u
}
where $ \alpha \in [0,1]$, $\alpha \geqq 1 - \epsilon$, and $u$ is an unknown distribution over $Q$. Defining $A_\alpha = \alpha \pitilde(q,\cdot) + (1-\alpha) \sum_{j=1}^{\vert Q\vert}u_j \pitilde(q_j,\cdot)
$, we note that $\phi^s(x)$ is an empirical distribution for $A_\alpha$, implying:
 \caligns{
 &\lim_{\vert s \vert \rightarrow \infty}\vert \vert \phi^s(x) - \pitilde(q,\cdot) \vert \vert_\infty 
 =\lim_{\vert s \vert \rightarrow \infty}
\vert \vert \phi^s(x) - A_\alpha + A_\alpha 
- \pitilde(q,\cdot) \vert \vert_\infty \\
 & \leqq \overbrace{\lim_{\vert s \vert \rightarrow \infty}   \vert \vert \phi^s(x) -A_\alpha \vert\vert_\infty }^{\textrm{\scriptsize  a.s. $0$ by Glivenko-Cantelli}}  +  \lim_{\vert s \vert \rightarrow \infty}\vert \vert  A_\alpha - \pitilde(q,\cdot) \vert \vert_\infty \\
&\leqq_{a.s}  (1-\alpha) \left ( \vert \vert\pitilde(q,\cdot) - u \vert\vert_\infty  \right ) \leqq_{a.s}  \epsilon
 }
This completes the proof.
\end{proof}
\subsection{Computation of  $\epsilon$-synchronizing Strings}
Next we describe identification of  $\epsilon$-synchronizing strings given a sufficiently long observed string ($i.e.$ a sample path) $s$. Theorem~\ref{thmepssynchro}  guarantees existence, and Corollary~\ref{corsynchrodepth} establishes that $O(1/\epsilon)$ substrings need to be analyzed till we encounter an $\epsilon$-synchronizing string.
%
These  do not provide an executable algorithm, which arises from  
an inspection of the geometric structure of the set of probability vectors over $\Sigma$, obtained by constructing $\phi^s(x)$ for different choices of the candidate string $x$.


\begin{defn}[Derivative Heap]\label{defderivheap}
  Given a string $s$ generated by a QSP, a derivative heap $\mathcal{D}^s: 2^{\Sigma^\star} \rightarrow \mathscr{D}(\vert \Sigma \vert -1)$ is the set of probability distributions over $\Sigma$ calculated for a  subset of strings  $L \subset \Sigma^\star$ as:
\cgather{
\mathcal{D}^s(L) = \big \{ \phi^s(x): x \in L \subset \Sigma^\star\big \}
}
\end{defn}
\begin{lem}[Limiting Geometry]\label{lemlimderiv}
 Let us define:
\cgather{
\mathcal{D}_\infty = \lim_{\vert s \vert \rightarrow \infty }\lim_{L \rightarrow \Sigma^\star} \mathcal{D}^s(L)
}
If $\mathscr{U}_\infty$ is the convex hull of $\mathcal{D}_\infty$, and $u$ is a vertex of $\mathscr{U}_\infty$, then 
\cgather{
\exists q \in Q, \textrm{such that } u=\pitilde(q,\cdot)
}
\end{lem}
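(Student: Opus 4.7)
The plan is to identify $\mathscr{U}_\infty$ with the convex hull of the finite set $\{\pitilde(q,\cdot) : q \in Q\}$, and then invoke the elementary fact that the vertices of the convex hull of a finite set are a subset of that set. The argument has two sides: showing $\mathcal{D}_\infty$ is contained in $\mathrm{conv}\{\pitilde(q,\cdot)\}$, and showing each $\pitilde(q,\cdot)$ itself lies in $\overline{\mathcal{D}_\infty}$.

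For the first direction, I would fix $x \in \Sigma^\star$ and apply the Glivenko--Cantelli/ergodicity argument already used in Theorem~\ref{thmsymderiv}. When $x$ appears in $s$, the PFSA is distributed over $Q$ according to $\wp_x$ (asymptotically, by ergodicity the empirical frequency of $x$-occurrences terminating in each state $q_i$ converges to $\wp_x(q_i)$). Hence for any alphabet symbol $\sigma_j$,
\begin{equation*}
\lim_{|s|\to\infty} \phi^s(x)\big|_j \;\xrightarrow{a.s.}\; \sum_{i=1}^{|Q|} \wp_x(q_i)\, \pitilde(q_i,\sigma_j),
\end{equation*}
so every limit point of $\phi^s(x)$ is a convex combination of the rows $\pitilde(q_i,\cdot)$. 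Taking the limit $L \to \Sigma^\star$ shows $\mathcal{D}_\infty \subseteq \mathrm{conv}\{\pitilde(q,\cdot) : q \in Q\}$, and hence $\mathscr{U}_\infty \subseteq \mathrm{conv}\{\pitilde(q,\cdot) : q \in Q\}$.

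For the reverse inclusion on the generating vertices, I would use Theorem~\ref{thmepssynchro} together with the strong connectivity of $\mathcal{P}_\mathcal{H}$. Fix $q \in Q$ and $\epsilon' > 0$. The $\epsilon$-synchronization theorem produces $x \in \Sigma^\star$ with $\wp_x$ within $\epsilon'$ of some $\bvec \in \mathcal{E}$; by extending $x$ with a string $y$ that drives that state to $q$ (which exists by strong connectivity), we obtain an $\epsilon'$-synchronizing string $xy$ with $\wp_{xy}$ within $\epsilon''$ of $e^q$ for $\epsilon''$ that can be made as small as desired by the scaling argument in the proof of Theorem~\ref{thmepssynchro}. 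The asymptotic limit computed above then gives $\|\lim_{|s|\to\infty}\phi^s(xy) - \pitilde(q,\cdot)\|_\infty \leqq \epsilon''$. Letting $\epsilon'' \to 0$ shows $\pitilde(q,\cdot) \in \overline{\mathcal{D}_\infty}$ for each $q$, hence $\mathrm{conv}\{\pitilde(q,\cdot) : q \in Q\} \subseteq \overline{\mathscr{U}_\infty}$.

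Combining the two inclusions, the closure of $\mathscr{U}_\infty$ coincides with $\mathrm{conv}\{\pitilde(q,\cdot) : q \in Q\}$. Since the latter is the convex hull of a finite set, its extreme points (vertices) form a subset of $\{\pitilde(q,\cdot) : q \in Q\}$, so any vertex $u$ of $\mathscr{U}_\infty$ equals $\pitilde(q,\cdot)$ for some $q \in Q$, as claimed. I expect the most delicate bookkeeping to be the topological subtlety of whether $\mathcal{D}_\infty$ is closed and whether vertices of $\mathscr{U}_\infty$ agree with those of its closure; this is resolved by noting that $\mathrm{conv}\{\pitilde(q,\cdot)\}$ is a compact convex polytope whose vertex set is intrinsic and unaffected by whether one takes closed or open hulls of the dense subset $\mathcal{D}_\infty$.
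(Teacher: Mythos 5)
Your two inclusions are sound, and they are essentially the ingredients the paper uses: the containment $\mathcal{D}_\infty \subseteq \mathrm{conv}\{\pitilde(q,\cdot):q\in Q\}$ is exactly the observation made in the paper's proof (each limiting derivative is $\wp_x\Pitilde$, a convex combination of the rows, by Theorem~\ref{thmsymderiv}), and your approximation of each row $\pitilde(q,\cdot)$ by $\wp_{xy}\Pitilde$ via $\epsilon$-synchronization plus strong connectivity is the content implicitly invoked there. The gap is in your final step. From $\overline{\mathscr{U}_\infty}=\mathrm{conv}\{\pitilde(q,\cdot)\}$ you cannot conclude that a vertex of $\mathscr{U}_\infty$ is a vertex of its closure: an extreme point of a convex set need not be an extreme point of the closure, even when the set is dense in a polytope. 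Concretely, let $P$ be a triangle with vertices $A,B,C$ and let $C_0$ be the open interior of $P$ together with a single point $m$ on the open edge $AB$; then $C_0$ is convex and dense in $P$, and $m$ is an extreme point of $C_0$ but not a vertex of $P$. So the principle you appeal to --- that the vertex set is ``intrinsic and unaffected by whether one takes closed or open hulls of the dense subset'' --- is false in general, and the lemma's conclusion does not follow from the two inclusions alone.

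The way to close the argument, and the route the paper's (terse) proof actually takes, is to keep track of membership instead of passing to closures: an extreme point of $\mathscr{U}_\infty=\mathrm{conv}(\mathcal{D}_\infty)$ necessarily belongs to $\mathcal{D}_\infty$ itself (an elementary fact about convex hulls, requiring no compactness), hence a vertex $u$ is literally of the form $u=\lim_{\vert s\vert\to\infty}\phi^s(x)=\wp_x\Pitilde=\sum_i \wp_x\vert_i\,\pitilde(q_i,\cdot)$, an explicit convex combination of the rows --- this is precisely the paper's one-line statement. Extremality of $u$ is then what must be played against this representation to force the combination to be degenerate, and it is here (not in identifying vertex sets of $\mathscr{U}_\infty$ and $\overline{\mathscr{U}_\infty}$) that your step 2 --- the fact that the rows are approached arbitrarily closely by elements of $\mathcal{D}_\infty$ --- should be deployed, to exclude a vertex sitting at a genuinely mixed $\wp_x$. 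Your density argument is the right material, but in your write-up it is used to justify an invalid topological identification rather than to rule out nontrivial convex combinations at the candidate vertex.
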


\begin{proof}
 Recalling Theorem~\ref{thmsymderiv}, the result follows from noting that any element of $\mathcal{D}_\infty$ is a convex combination of elements from the set $\{\pitilde(q_1,\cdot), \cdots , \pitilde(q_{\vert Q\vert},\cdot)  \}$.
\end{proof}

Lemma~\ref{lemlimderiv} does not claim that the number of vertices of the convex hull of $\mathds{D}_\infty$ equals the number of states, but that every vertex  corresponds to a state.  We cannot generate $\mathcal{D}_\infty$  since we 
have a finite observed string $s$, and we can  calculate $\phi^s(x)$ for a finite number of  $x$. Instead, we show that  choosing a string corresponding to the vertex of the convex hull of the  heap, constructed by considering $O(1/\epsilon)$ strings, gives us an $\epsilon$-synchronizing string with high probability.

\begin{thm}[Derivative Heap Approx.]\label{thmderivheap}
 For  $s$ generated by a QSP, let $\mathcal{D}^s(L)$ be computed with $L=\Sigma^{O(log(1/\epsilon))}$. If for  $x_0 \in \Sigma^{O(log(1/\epsilon))}$,  $\phi^s(x_0)$  is a vertex of the convex hull of $\mathcal{D}^s(L)$, then 
\cgather{
Prob(\textrm{$x_0$ is not $\epsilon$-synchronizing}) \leqq e^{-\vert s \vert \epsilon p_0}
}
 where $p_0$ is the probability of encountering  $x_0$ in $s$.
\end{thm}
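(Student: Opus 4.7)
The plan is to argue contrapositively: assuming $x_0$ is not $\epsilon$-synchronizing, I would show that the event ``$\phi^s(x_0)$ is a vertex of the convex hull of $\mathcal{D}^s(L)$'' forces a large deviation of the empirical next-symbol distribution following $x_0$ from its mean, and then invoke a Chernoff/Hoeffding bound on the multinomial sample of size $\approx \vert s\vert p_0$ furnished by the occurrences of $x_0$ in $s$.

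First, I would unpack the hypothesis that $x_0$ is not $\epsilon$-synchronizing (Definition~\ref{defepsilonsynchro}): $\Vert\wp_{x_0}-\bvec\Vert_\infty > \epsilon$ for every $\bvec\in\mathcal{E}$, so $\wp_{x_0}$ places mass exceeding $\epsilon$ on at least two hidden states. Consequently the population next-symbol distribution $A := \sum_{j}(\wp_{x_0})_j\,\pitilde(q_j,\cdot)$ is a genuine convex combination of the row distributions $\pitilde(q_j,\cdot)$, with at least two nontrivial weights. By Lemma~\ref{lemlimderiv}, $A$ sits strictly in the interior of $\mathscr{U}_\infty$, bounded away from every vertex by a quantity comparable to $\epsilon$.

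Next I would invoke Corollary~\ref{corsynchrodepth} to observe that $L=\Sigma^{O(\log(1/\epsilon))}$ contains at least one $\epsilon$-synchronizing string resolving to each hidden state, so by Theorem~\ref{thmsymderiv} their derivatives converge almost surely to within $\epsilon$ of the extreme points $\pitilde(q,\cdot)$ of $\mathscr{U}_\infty$. Hence the convex hull of $\mathcal{D}^s(L)$ approximates $\mathscr{U}_\infty$ from the inside up to fluctuations that are themselves controlled by concentration. For $\phi^s(x_0)$ to appear as a vertex of the empirical hull, rather than as an interior point, its realized value must have moved (in $\infty$-norm) by at least a constant multiple of $\epsilon$ away from its mean $A$ toward the boundary of $\mathscr{U}_\infty$. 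Finally, by ergodicity and stationarity of the QSP, the counts $\#^s x_0\sigma_i$ behave (up to negligible boundary effects) as a multinomial sample of size $n\approx\vert s\vert p_0$ drawn from $A$, and a standard Chernoff/Hoeffding inequality on this multinomial produces $\Pr(\Vert\phi^s(x_0)-A\Vert_\infty\geqq\epsilon)\leqq e^{-c\vert s\vert p_0\epsilon}$, matching (after absorbing constants) the bound in the statement.

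\textbf{Main obstacle.} The delicate step is the second one: making rigorous the claim that vertex-hood of $\phi^s(x_0)$ in the empirical heap actually forces a deviation of magnitude $\epsilon$ from $A$. This requires a quantitative Hausdorff-type comparison between the empirical convex hull of $\mathcal{D}^s(L)$ and $\mathscr{U}_\infty$, uniform over all $y\in L$ (since each $\phi^s(y)$ is itself a random empirical estimate), together with a geometric lower bound on the distance from $A$ to the boundary of $\mathscr{U}_\infty$ expressed in $\epsilon$ and the PFSA's transition structure. Once this geometric reduction is secured, the concentration bound that converts it into the advertised exponential tail is routine.
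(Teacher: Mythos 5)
Your plan follows essentially the same route as the paper: both arguments reduce the event ``$x_0$ selected as a hull vertex yet not $\epsilon$-synchronizing'' to a large deviation of the empirical symbolic derivative $\phi^s(x_0)$ from its limiting value $\wp_{x_0}\Pitilde$, and both then exploit the fact that the occurrences of $x_0$ supply an effective sample of size $n(\vert s\vert)\approx \vert s\vert p_0$. The one substantive difference is the concentration tool. The paper does not use Hoeffding/Chernoff on the $\infty$-norm deviation; it applies Sanov's theorem for a \emph{convex} set of distributions (which lets it drop the polynomial prefactor), obtaining $Prob\bigl(KL(\phi^s(x_0)\,\Vert\,\wp_{x_0}\Pitilde)>\epsilon\bigr)\leqq e^{-n(\vert s\vert)\epsilon}$, and only afterwards relates KL to the $\infty$-norm via a Pinsker-type inequality. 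This is what produces an exponent \emph{linear} in $\epsilon$, as in the stated bound $e^{-\vert s\vert\epsilon p_0}$; your Hoeffding route, applied to $\Vert\phi^s(x_0)-A\Vert_\infty\geqq\epsilon$, would naturally give an exponent of order $n\epsilon^2$, so as written it does not recover the advertised rate without reworking the deviation event at the KL level. Finally, the ``main obstacle'' you flag — making rigorous that vertex-hood of $\phi^s(x_0)$ in the finite-sample heap forces an $\epsilon$-size deviation when $x_0$ fails to synchronize — is indeed the delicate point, but you should know the paper does not discharge it either: it simply asserts that in the $\vert s\vert\to\infty$ limit the vertex is guaranteed $\epsilon$-synchronizing (citing Theorem~\ref{thmepssynchro}, Corollary~\ref{corsynchrodepth}, and implicitly Lemma~\ref{lemlimderiv}) and folds everything else into the Sanov bound, so your proposal is no less complete than the published argument on that score.
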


%
\begin{proof}
The result follows from Sanov's Theorem~\cite{Cs84} for convex set of probability distributions. If $\vert s\vert \rightarrow \infty$, then $x_0$ is guaranteed to be $\epsilon$-synchronizing (Theorem~\ref{thmepssynchro}, and Corollary~\ref{corsynchrodepth}). 
Denoting the number of times we encounter $x_0$ in $s$ as $n(\vert s \vert)$, and since $\mathcal{D}_\infty$ is a convex set of distributions (allowing us to drop the polynomial factor in Sanov's bound), we apply Sanov's Theorem to the case of finite $s$:
\cgather{
Prob\Big (KL\big (\phi^s(x_0) \big \vert \big \vert \wp_{x_0}\Pitilde\big ) > \epsilon \Big ) \leqq e^{-n(\vert s \vert) \epsilon} 
}
where $KL(\cdot \vert \vert \cdot)$ is the Kullback-Leibler divergence~\cite{leh2005}.
From the bound~\cite{tsy09}:
\cgather{
\frac{1}{4}\vert \vert  \phi^s(x_0) - \wp_{x_0}\Pitilde \vert \vert_\infty^2 \leqq  KL\big (\phi^s(x_0) \big \vert \big\vert \wp_{x_0}\Pitilde\big )
}
and  $n(\vert s \vert) \rightarrow \vert s \vert p_0$, where $p_0 > 0$ is the stationary probability of encountering $x_0$ in $s$, we conclude:
\cgather{
Prob\big (\vert \vert  \phi^s(x_0) - \wp_{x_0}\Pitilde \vert \vert_\infty > \epsilon \big ) \leqq 2 e^{-\frac{1}{2}\vert s \vert  \epsilon p_0} 
 \label{eqconfi}
}
 which completes the proof.
\end{proof}
\section{Entropy Rate for PFSA-generated Processes}
Given the  PFSA model, the entropy rate is easily computable.
\begin{thm}[Entropy Rate For PFSA]\label{thmentropyratepfsa}
The entropy rate $H(G)$, in bits,  for the  QSP generated by  a PFSA $G=(Q,\Sigma,\delta,\pitilde)$ is given by:
\cgather{\label{eqnEPFSA}
H(G) = \sum_{i=1}^{\vert Q\vert} \wp_\lambda\big \vert_i \sum_{\sigma_j \in \Sigma} \pitilde(q_i,\sigma_j) \log\pitilde(q_i,\sigma_j)
}
where the base of the logarithms is $2$.
\end{thm}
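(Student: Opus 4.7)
The plan is to reduce $H(\mathcal{X})$ to a limit of conditional entropies via stationarity, and then evaluate that limit by exploiting the deterministic transition structure of the PFSA together with the $\epsilon$-synchronization result (Theorem~\ref{thmepssynchro}). First I would invoke the chain rule \eqref{eqchrule} and strict stationarity to write $H(\mathcal{X}) = \lim_{n \rightarrow \infty} H(X_n \vert X_{n-1},\cdots,X_1)$.

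Next, introduce the hidden-state process $Q_t$ on $Q$, with $Q_0$ drawn from the stationary distribution $\wp_\lambda$. Since $\delta$ is deterministic, $Q_{n-1}$ is a measurable function of $(Q_0, X_1, \cdots, X_{n-1})$, and conditional on $Q_{n-1}=q_i$ the next symbol $X_n$ has distribution $\pitilde(q_i,\cdot)$ independently of everything else. Adding $Q_0$ to the conditioning and using this Markov property yields
\cgather{
H(X_n \vert Q_0, X_{n-1},\cdots,X_1) = H(X_n \vert Q_{n-1}) = -\sum_{i=1}^{\vert Q \vert} \wp_\lambda\vert_i \sum_{\sigma_j \in \Sigma} \pitilde(q_i,\sigma_j) \log \pitilde(q_i,\sigma_j),
}
since $P(Q_{n-1}=q_i)=\wp_\lambda\vert_i$ by stationarity of the induced chain on $Q$. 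Because extra conditioning cannot increase entropy, this quantity is a lower bound on $H(X_n \vert X_{n-1},\cdots,X_1)$ for every $n$, hence on $H(\mathcal{X})$; this is the routine direction.

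For the matching upper bound I would control the gap $I(X_n; Q_0 \vert X_{n-1}, \cdots, X_1)$ as $n \rightarrow \infty$. The idea: for any $\epsilon > 0$ an $\epsilon$-synchronizing string $w_\epsilon$ exists (Theorem~\ref{thmepssynchro}), and by ergodicity $w_\epsilon$ recurs almost surely in any sufficiently long observed history; immediately after its most recent occurrence the posterior on $Q_{n-1}$ lies within $\epsilon$ (in sup-norm) of a point mass on a fixed state, independently of $Q_0$. Letting $\epsilon \rightarrow 0$ then drives the residual dependence of $X_n$ on $Q_0$ (given the past) to zero.

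The hard part is precisely this last step, which I would close as follows. Since $H(X_n \vert X_{n-1},\cdots,X_1)$ is non-increasing in $n$ (a standard consequence of stationarity and conditioning), its limit equals the entropy of $X_1$ conditioned on the infinite past. The existence of $\epsilon$-synchronizing strings for every $\epsilon > 0$, together with their almost-sure recurrence in the infinite past, implies that $Q_0$ is measurable with respect to the infinite-past $\sigma$-algebra up to null events; hence that limiting conditional entropy reduces to $H(X_1 \vert Q_0)$, which equals the formula claimed in \eqref{eqnEPFSA} (with the conventional negative sign of the entropy definition). The rest is routine expansion of $H(X_1 \vert Q_0)$ against the definition of $\pitilde$.
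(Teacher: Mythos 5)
Your proof is correct in substance, but it follows a genuinely different route from the paper's. The paper's argument is much shorter: it applies the chain rule \eqref{eqchrule}, asserts that each conditional entropy $H(X_i\vert X_{i-1},\dots,X_1)$ lies in the finite set of per-state entropies $\{-\sum_{j}\pitilde(q,\sigma_j)\log\pitilde(q,\sigma_j):q\in Q\}$ because the machine ``is always at some state,'' and then converts the resulting Ces\`aro average using the fact that state-occupancy frequencies converge to $\wp_\lambda$ by strong connectivity. You instead prove a two-sided estimate: the conditioning lower bound $H(X_n\vert X_{n-1},\dots,X_1)\ \geqq\ H(X_n\vert Q_0,X_{n-1},\dots,X_1)=\sum_i\wp_\lambda\vert_i\,H(\pitilde(q_i,\cdot))$, and a matching upper bound obtained by showing that $\epsilon$-synchronizing strings recur almost surely in the infinite past, so that the current state is (up to null sets) measurable with respect to the past and $\lim_n H(X_n\vert X_{n-1},\dots,X_1)=H(X_1\vert Q_0)$. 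Your route is more careful than the paper's on exactly the delicate point: for a non-synchronizable PFSA the conditional law of $X_i$ given a finite past is a mixture over states, so $H(X_i\vert X_{i-1},\dots,X_1)$ is not literally one of the per-state entropies, and the identity \eqref{eqnEPFSA} really does require the posterior over states to degenerate almost surely given the past --- which is precisely what your synchronization step supplies, whereas the paper leaves it implicit. Two caveats: (i) you need slightly more than the bare statement of Theorem~\ref{thmepssynchro} --- an occurrence of $w_\epsilon$ anywhere in the history must pin the state to within $\epsilon$ irrespective of the symbols preceding and following it, i.e., synchronization from arbitrary priors together with preservation under right extensions; both are asserted in the paper (the construction inside the proof of Theorem~\ref{thmepssynchro}, and Claim~A in the corollary to Lemma~\ref{lementropydev}) and should be cited explicitly, with the $\epsilon\to 0$ passage closed via, e.g., $H(Q_{n-1}\vert X_{n-1},\dots,X_1)\to 0$ or the martingale argument you sketch; (ii) as you note, the formula as printed in the theorem omits the minus sign, and your displayed expression is the correct one.
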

\begin{proof}
Denote the QSP generated by $G$ as  $\mathcal{X}= \{X_i\}$. Using the chain rule (See Eq.~\eqref{eqchrule}), we have:
\cgather{
H(G) = \lim_{n \rightarrow \infty} \frac{1}{n} \sum_{i=1}^n H(X_i \vert X_{i-1},\cdots,X_1)
}
Since $G$ is always at some state $q \in Q$, we conclude that for any $i$:
\cgathers{
 H(X_i \vert X_{i-1},\cdots,X_1) \in \left \{ \sum_{\sigma_j \in \Sigma} \pitilde(q,\sigma_j) \log\pitilde(q,\sigma_j) : q \in Q  \right \}
}
Furthermore, since $G$ is strongly connected, and therefore has a unique stationary distribution $\wp_\lambda$~\cite{St97}, the number of times state $q_i$ occurs approaches $n\wp_\lambda \big\vert_i$ as $n \rightarrow \infty$. This completes the proof.
\end{proof}
If the underlying PFSA model is not available, and we have only  a symbolic stream generated by a QSP, then  Eq.~\eqref{eqnEPFSA} cannot be directly employed to estimate the entropy rate. In that case, one possibility is to  first infer the hidden PFSA  using the algorithm reported in \cite{CL12g}, and then estimate the entropy rate from Eq.~\eqref{eqnEPFSA}. However, if we are only interested in the latter, then we do not need to infer the complete generative model; and there exists a more parsimonious approach to estimate the entropy rate directly. 

First, we need a lemma which bounds the deviation in entropy for deviations in the probability distribution in the discrete case.
\begin{lem}[Bound on Entropy Deviation]\label{lementropydev}
For probability distributions $p,q$ on a finite set $\Sigma$, we have for all $\epsilon \in (0,1)$,
\cgather{
\vert \vert p -q \vert \vert_\infty \leqq \epsilon \Rightarrow \notag \\\vert H(p) - H(q) \vert <  \epsilon' \log \frac{\vert \Sigma \vert-1}{\epsilon'} + (1-\epsilon') \log \frac{1}{1-\epsilon'} \notag\\
\textrm{where } \epsilon' = \left \{ \begin{array}{cl}
\epsilon & \textrm{if } \epsilon \leqq 1 / 2 \\
1 - \epsilon & \textrm{otherwise} 
\end{array}\right. \notag
}
where $H(p),H(q)$ are  entropies for distributions  $p,q$ respectively.
\end{lem}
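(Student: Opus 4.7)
The right-hand side is precisely $H_b(\epsilon') + \epsilon' \log(\vert \Sigma \vert - 1)$, where $H_b$ denotes the binary entropy, so the claim is a version of the classical Fano/Zhang continuity estimate for Shannon entropy. My plan is to prove it via a maximal-coupling argument.

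First, I would construct on a common probability space two random variables $X \sim p$ and $Y \sim q$ forming a maximal coupling, so that the ``disagreement probability'' $P(X \neq Y)$ equals the total variation $V(p,q) = \frac{1}{2}\|p - q\|_1$. I would then argue that under the hypothesis $\|p - q\|_\infty \leq \epsilon$ the relevant disagreement probability can be pushed down to $\epsilon'$ by an extremal / reduction argument that concentrates the $\ell_\infty$ budget on as few coordinates as possible, exploiting the symmetry that replaces $\epsilon$ by $1 - \epsilon$ when $\epsilon > 1/2$.

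Second, I would introduce the indicator $E = \mathbf{1}_{X \neq Y}$ and bound the conditional entropy via the chain rule: $H(X \mid Y) \leq H(E \mid Y) + H(X \mid E, Y)$. The first term is at most $H(E) = H_b(P(E = 1)) \leq H_b(\epsilon')$ by monotonicity of $H_b$ on $[0, 1/2]$ together with the definition of $\epsilon'$. Splitting the second term by values of $E$: on the event $\{E = 0\}$ we have $X = Y$ contributing $0$, and on $\{E = 1\}$ the variable $X$ is supported on $\Sigma \setminus \{Y\}$ contributing at most $\log(\vert \Sigma \vert - 1)$. Averaging gives $H(X \mid E, Y) \leq \epsilon' \log(\vert \Sigma \vert - 1)$.

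Third, I would apply the identity $H(X) - H(Y) = H(X \mid Y) - H(Y \mid X)$, run the same bound with the roles of $p$ and $q$ swapped to absorb the absolute value, and collect terms to obtain $|H(p) - H(q)| \leq H_b(\epsilon') + \epsilon' \log(\vert \Sigma \vert - 1)$. Expanding $H_b(\epsilon') = -\epsilon' \log \epsilon' - (1 - \epsilon')\log(1 - \epsilon')$ algebraically rearranges this into exactly the claimed right-hand side, with the strict inequality inherited from the fact that equality in the coupling step would force degenerate distributions ruled out by $\epsilon \in (0,1)$.

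The main obstacle is the very first step: translating an $\ell_\infty$ bound into the total-variation control that the coupling argument requires, since in general $\|p - q\|_1$ can exceed $\|p - q\|_\infty$ by a factor of $\vert \Sigma \vert$. I expect the technical crux to be an extremal characterization --- showing that the supremum of $|H(p) - H(q)|$ under the constraint $\|p - q\|_\infty \leq \epsilon$ is attained by perturbations affecting only a small number of coordinates, so that the effective total variation collapses back to $\epsilon'$ and the Fano/Zhang estimate applies cleanly; without this reduction the coupling yields only a weaker bound scaling with $\vert \Sigma \vert \epsilon$.
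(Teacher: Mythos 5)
Your steps 2--3 are the standard Fano-type machinery and are fine as far as they go: with a maximal coupling $(X,Y)$, $X\sim p$, $Y\sim q$, one gets $H(X\mid Y)\leq H_b(P_e)+P_e\log(\vert\Sigma\vert-1)$ with $P_e=P(X\neq Y)=\tfrac12\Vert p-q\Vert_1$ (here $H_b$ is the binary entropy), and $\vert H(p)-H(q)\vert\leq\max\{H(X\mid Y),H(Y\mid X)\}$, which matches the claimed right-hand side provided $P_e\leq\epsilon'$. But the first step, which you yourself flag as the crux and defer to an ``extremal/reduction argument'', is the entire content of the lemma, and it cannot be carried out in the stated generality. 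An $\ell_\infty$ ball of radius $\epsilon$ contains pairs whose total variation is of order $\lfloor\vert\Sigma\vert/2\rfloor\,\epsilon$, and for such pairs the entropy gap genuinely exceeds $H_b(\epsilon')+\epsilon'\log(\vert\Sigma\vert-1)$. Concretely, take $\vert\Sigma\vert=27$ (the paper's own text alphabet), $p$ uniform on $10$ symbols and $q$ uniform on all $27$: then $\Vert p-q\Vert_\infty=\tfrac1{10}-\tfrac1{27}\approx 0.063$, while $\vert H(p)-H(q)\vert=\log_2 2.7\approx 1.43$ bits, far above the claimed bound ($\approx 0.64$ bits at $\epsilon=0.063$, $\approx 0.94$ bits even at $\epsilon=0.1$). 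So no reduction ``concentrating the $\ell_\infty$ budget on few coordinates'' exists for arbitrary $p$; your own worry that the coupling route only yields a bound scaling like $\vert\Sigma\vert\epsilon$ is exactly right, and the ``symmetry'' replacing $\epsilon$ by $1-\epsilon$ when $\epsilon>1/2$ has no analogue for the $\ell_\infty$ constraint either.

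For comparison, the paper's own proof never passes through total variation: it fixes $p$ at a vertex of the simplex, $p^\star=(0,\dots,0,1)$, argues via a first-order perturbation computation (the integrand $x\log\tfrac1x$ being steepest near $0$) that the entropy-deviation-maximizing $q$ inside the $\ell_\infty$ ball around $p^\star$ is $q^\star=\bigl(\tfrac{\epsilon'}{\vert\Sigma\vert-1},\dots,\tfrac{\epsilon'}{\vert\Sigma\vert-1},1-\epsilon'\bigr)$, and evaluates $H(q^\star)$. In other words, the paper invests all its effort precisely in the vertex-extremality claim that you leave open (and its justification that the worst $p$ is degenerate is itself only a local gradient heuristic --- note the example above uses a non-degenerate $p$). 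So your proposal as written has a genuine gap at its first step: the coupling/Fano half you spell out is the easy half, and it cannot be completed into a proof of the lemma as stated without an additional hypothesis (e.g., a binary alphabet, or one of the two distributions being near-degenerate) that restores control of the disagreement probability by $\epsilon'$.
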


\begin{proof}
We have from definition:
\calign{
 H(p) - H(q) &  =  \sum_i p_i \log \frac{1}{p_i} - \sum_i q_i \log \frac{1}{q_i} \notag 
}
We note that the function $f(x) = x \log \frac{1}{x}$ satisfies:
\cgather{
\delta f = \left ( \log \frac{1}{x} - \frac{1}{\ln 2} \right )  \delta x
}
 implying that perturbations of $x$ cause maximum change in $f$, when $x$ is in the neighborhood of $0$, which in turn implies that deviation in entropy for a perturbed distribution $p$ is the maximized  when:
\cgather{
 p \rightarrow p^\star=\begin{pmatrix}
0 & \cdots & 0 & 1
\end{pmatrix} \textrm{upto permutations}
}
Since, $\vert \vert p - q \vert \vert_\infty \leqq \epsilon$, the perturbed distribution $q$ from $p=p^\star$ is non-unique. We claim (Claim A), that the perturbed distribution resulting in maximum entropy deviation, is given by:
\cgather{\label{eqCL}
q^\star = \begin{pmatrix}
\frac{\epsilon'}{\vert \Sigma -1 \vert } & \cdots & \frac{\epsilon'}{\vert \Sigma -1 \vert } & 1- \epsilon'
\end{pmatrix} \textrm{upto permutations} \\
\textrm{where } \epsilon' = \left \{ \begin{array}{cl}
\epsilon & \textrm{if } \epsilon \leqq 1 / 2 \\
1 - \epsilon & \textrm{otherwise} 
\end{array}\right.
}
To establish this claim, we first note that $q^\star$ satisfies the  constraints:
\cgather{
 \forall i \  q^\star_i > 0, \  \sum_i q^\star_i = 1, \  \vert \vert p^\star - q^\star \vert \vert_\infty = \epsilon
}
Let $q'$ be a perturbation of $q^\star$, defined as:
\cgather{
q'_i = q^\star_i + a_i , \textrm{ with }  \sum_i a _i = 0 
}
satisfying the constraint:
\cgather{\label{eqC2}
\vert \vert q' -p^\star \vert \vert_\infty \leqq \epsilon
}
Note that the above constraint, and the definition of $q^\star$ implies that:
\cgather{\label{eqC1}
a_{\vert \Sigma \vert} \geqq 0
}
Then we claim that for small perturbations, 
\cgather{
H(q') < H(q^\star)
}
We find differential perturbations in contribution to the  entropy from perturbation of each entry in $q^\star$. For terms $i \in \{1, \cdots , \vert \Sigma \vert-1 \}$, we note that the 
perturbed term is of the form:
\cgather{
g(x) = \frac{\epsilon' + x}{\vert \Sigma \vert -1} \log \frac{\vert \Sigma \vert -1}{\epsilon' + x}\\
\Rightarrow \delta g(0) = \frac{1}{\vert \Sigma \vert -1} \left ( \log \frac{\vert \Sigma \vert -1}{\epsilon'} - \ln 2   \right )  a_i
}
if $a_i$ is small.  And  the $\vert \Sigma \vert$-th term is of the form:
\cgather{
f(x) = (1-\epsilon'+x) \log \frac{1}{1 - \epsilon +x}\\
\Rightarrow \delta f(0) = \left (\log \frac{1}{1-\epsilon'} - \ln 2 \right )  a_{\vert \Sigma \vert}
}
if $a_{\vert \Sigma \vert}$ is small. This implies that  the perturbation of entropy, for small perturbations in the distribution $q^\star$, is given by:
\mltlne{
\delta H(q^\star) = \frac{1}{\vert \Sigma \vert -1} \left ( \log \frac{\vert \Sigma \vert -1}{\epsilon'} - \ln 2   \right ) \sum_{i=1}^{\vert \Sigma \vert -1} a_i \\ +  \left (\log \frac{1}{1-\epsilon'} - \ln 2 \right ) a_{\vert \Sigma \vert}
}
Noting that $\sum_{i=1}^{\vert \Sigma \vert -1} a_i = -a_{\vert \Sigma \vert}$, and setting $b= \vert \Sigma \vert -1$, we have:
\mltlne{
\delta H(q^\star) = \Bigg ( - \frac{1}{b} \left ( \log \frac{b}{\epsilon'} - \ln 2   \right )   +  \left (\log \frac{1}{1-\epsilon'} - \ln 2 \right ) \Bigg )  a_{\vert \Sigma \vert} \\
= \Bigg  ( \underbrace {\left ( \frac{1}{b} - 1 \right  )\ln 2 }_{t_1}  + \underbrace{\log \frac{1}{1-\epsilon'}  - \frac{1}{b}
\log \frac{b}{\epsilon'}}_{t_2} \Bigg ) a_{\vert \Sigma \vert}
}
We note that since $\vert \Sigma \vert \geqq 2$, $t_1 \leqq 0$. Then, since $\epsilon' \leqq 1/ 2$,  we have: 
\cgather{
\log \frac{1}{1-\epsilon'} \leqq 1 \textrm{ with equality for $\epsilon' = 1/ 2$}
}
And we note that $\frac{1}{b}\log \frac{b}{\epsilon'}$ attains its minimum value of $1 / b + 1/b \log b$  at $\epsilon' = 1/2$, implying:
\cgather{
\delta H(q^\star) \leqq \left ( \frac{1}{b} - 1 \right  )  (\ln 2 - 1) \leqq 0
}
This establishes that within the set of admissible perturbed distributions $q'$, from $q^\star$, 
all infinitesimally small perturbations necessarily reduce the entropy, $i.e.$, $H(q^\star)$ attains a locally maximum value.
We note that for all arbitrary admissible  perturbations $q'$ from $q^\star$, $\vert \vert q' - p^\star \vert \vert_\infty \leqq \epsilon $ and definition of $\epsilon'$ implies that each entry in $q'$ is either always in $[0, 1 / 2]$, or in $[1/2, 1]$, and not both.
 Noting that each summand in the calculation of entropy is of the form $x\log x$, which is monotonic in both intervals, we conclude that $H(q^\star$) is indeed the globally maximum entropy within all admissible perturbations $q'$.
 It follows that any perturbation of $q^\star$, satisfying the constraint of Eq.~\eqref{eqC2}, leads to a smaller difference of entropy from $p^\star$, which establishes claim A. 
Noting that:
\cgathers{
\vert H(p^\star) -H(q^\star) \vert = \epsilon' \log \frac{\vert \Sigma \vert-1}{\epsilon'} + (1-\epsilon') \log \frac{1}{1-\epsilon'}
}
completes the proof.
\end{proof}
This  bound on entropy deviation for $\infty$-norm bounded deviations in   distribution will be important in the sequel. We denote this as the generalized binary entropy function $\mathds{B}(\epsilon, \vert \Sigma \vert )$.
\begin{defn}[Generalized Binary Entropy Function]\label{defbef}
\cgather{
\mathds{B}(\epsilon, \vert \Sigma \vert ) = \epsilon' \log \frac{\vert \Sigma \vert-1}{\epsilon'} + (1-\epsilon') \log \frac{1}{1-\epsilon'}\\
\textrm{where } \epsilon' = \left \{ \begin{array}{cl}
\epsilon & \textrm{if } \epsilon \leqq 1 / 2 \\
1 - \epsilon & \textrm{otherwise} 
\end{array}\right. \notag
}
\end{defn}

\begin{cor}[To Lemma~\ref{lementropydev}]
Given a symbol stream generated by a PFSA  $G=(Q,\Sigma,\delta,\pitilde)$, and an $\epsilon$-synchronizing string $x_0$, we have:
\cgathers{
\left \vert \lim_{n\rightarrow \infty} \frac{1}{\vert \Sigma^n_+ \vert} \sum_{x \in \Sigma^n_+} \lim_{\vert s \vert \rightarrow \infty}H(\phi^s(x_0 x)) - H(G) \right \vert < \mathds{B}(\epsilon, \vert \Sigma \vert )
}
\end{cor}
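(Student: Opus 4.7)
The plan is to reduce the claim to three prior results: Theorem~\ref{thmsymderiv} ($\epsilon$-Convergence), Lemma~\ref{lementropydev} (Bound on Entropy Deviation), and Theorem~\ref{thmentropyratepfsa} (Entropy Rate For PFSA). I will first establish a pointwise $\mathds{B}(\epsilon,\vert\Sigma\vert)$-bound on $\vert \lim_{\vert s\vert\to\infty} H(\phi^s(x_0 x)) - H(\pitilde([x_0 x],\cdot))\vert$ for every relevant $x$, carry that bound through the outer average by the triangle inequality, and then identify the averaged reference quantity with $H(G)$.

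First I would verify that $x_0 x$ inherits the $\epsilon$-synchronization property of $x_0$. Since $\wp_{x_0}$ lies within $\epsilon$ of some $\vartheta\in\mathcal{E}$ in $\infty$-norm, the deterministic symbol-indexed transition structure implies that the successive renormalized action of the $\Gamma_\sigma$ matrices drives the Dirac component of $\wp_{x_0}$ onto $e^{\delta([x_0],x)}$. Care is required because renormalization can in principle amplify residual mass; restricting attention to $x$ of positive stationary probability (the only ones for which $\lim_{\vert s\vert\to\infty}\phi^s(x_0 x)$ is well defined) keeps the distortion controlled. Theorem~\ref{thmsymderiv} then gives
\cgather{
\lim_{\vert s\vert\to\infty}\vert\vert \phi^s(x_0 x) - \pitilde([x_0 x],\cdot)\vert\vert_\infty \leqq_{a.s.} \epsilon,
}
and Lemma~\ref{lementropydev} applied pointwise yields $\vert \lim_{\vert s\vert\to\infty} H(\phi^s(x_0 x)) - H(\pitilde([x_0 x],\cdot))\vert < \mathds{B}(\epsilon,\vert\Sigma\vert)$. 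Averaging both sides over $x\in\Sigma^n_+$, passing to $n\to\infty$, and invoking the triangle inequality preserves this bound for the averaged quantities.

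The main obstacle is then the identification $\lim_n \frac{1}{\vert\Sigma^n_+\vert}\sum_{x\in\Sigma^n_+} H(\pitilde([x_0 x],\cdot)) = H(G)$. Since $\pitilde(q,\cdot)$ takes only $\vert Q\vert$ distinct values, the average collapses to $\sum_q \rho_q\, H(\pitilde(q,\cdot))$, where $\rho_q$ is the asymptotic fraction of extensions $x\in\Sigma^n_+$ with $\delta([x_0],x)=q$. Matching $\rho_q$ to $\wp_\lambda\vert_q$ from Theorem~\ref{thmentropyratepfsa} is the delicate step, since uniform counting of extensions need not agree a priori with the stationary measure of the $\pitilde$-weighted chain. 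I would resolve this by appealing to strong connectivity of $\mathcal{P}_\mathcal{H}$ together with the ergodic averaging that emerges in the $n\to\infty$ limit, effectively reinterpreting the outer normalization as the natural probability measure on extensions induced by the QSP. Once $\rho_q = \wp_\lambda\vert_q$ is in hand, combining the two bounds by the triangle inequality closes the argument.
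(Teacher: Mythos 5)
Your proposal follows essentially the same route as the paper's own proof: its Claim~A is your inheritance of $\epsilon$-synchronization under right extension via the renormalized $\Gamma_\sigma$ action, its Claim~B is your identification of the state-frequency $\rho_q$ over extensions with $\wp_\lambda\vert_q$ (which the paper likewise settles by appealing to strong connectivity and uniqueness of the stationary distribution), and the final step is the same pointwise use of Theorem~\ref{thmsymderiv} with Lemma~\ref{lementropydev} followed by the triangle inequality. The two spots you flag as delicate are exactly the spots the paper passes over briefly, so your treatment matches both the structure and the level of rigor of the published argument.
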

\begin{proof}
We first establish the following claim (Claim A):  $x_0$ is $\epsilon$-synchronizing implies that  any right extension $x_0 x$  is also $\epsilon$-synchronizing (where $x \in \Sigma^\star$).
To see this, note that  $x_0$ is $\epsilon$-synchronizing implies $\exists\bvec  \in \mathcal{E}$ with:
\cgather{
\wp_{x_0} = \alpha \bvec + (1-\alpha) u, \textrm{with } \alpha \in [0,1], \alpha \geqq 1 - \epsilon 
}
where $u$ is an unknown distribution over  $Q$.
 It follows that: $\forall \sigma \in \Sigma$, 
\cgather{
\wp_{x_0 \sigma} = \frac{1}{\vert \vert \wp_{x_0}\Gamma_\sigma   \vert \vert_1} \left (\alpha \bvec \Gamma_\sigma + (1-\alpha) u \Gamma_\sigma \right )
}
Now, $\forall \sigma \in \Sigma$, there is an unique $\bvec' \in \mathcal{E}$, such that $\bvec' = \bvec \Gamma_\sigma$, and since $\vert \vert \wp_{x_0}\Gamma_\sigma   \vert \vert_1 \leqq 1$, it follows that:
  $\forall \sigma \in \Sigma$, $\exists \bvec' \in \mathcal{E}$, such that:
\cgathers{
\wp_{x_0\sigma} =\alpha' \bvec' + \textrm{additional terms } , \textrm{with } \alpha' \in [0,1], \alpha' \geqq 1 - \epsilon 
}
By straightforward induction, we conclude that:
\cgather{
\forall x \in \Sigma^\star, \exists \bvec(x) \in \mathcal{E}, \textrm{such that }  \vert \vert \wp_{x_0x} - \bvec(x) \vert \vert_\infty \leqq \epsilon
}
which establishes Claim A.


Next we claim (Claim B) that $H(G)$ can be written as:
\cgather{
H(G) =  \lim_{n\rightarrow \infty} \frac{1}{\vert \Sigma^n_+ \vert} \sum_{x \in \Sigma^n_+} H\left (\pitilde([x],\cdot)\right )
}
To see this, note that the PFSA $G$ is strongly connected with  a unique stationary distribution $\wp_\lambda$, and Theorem~\ref{thmentropyratepfsa} implies:
\cgather{
H(G) =  \sum_{i=1}^{\vert Q\vert} \wp_\lambda\big \vert_i  H(\pitilde(q_i,\cdot))
}
Set the initial state of $G$ to be $q \in Q$, where $x_0$ $\epsilon$-synchronizes to $q$.
For any $n$, and each $x \in \Sigma^n_+$, $[x]$ is the equivalence class corresponding to some $q_i \in Q$. Let the number of times $[x]$ corresponds to $q_i$, for $x \in \Sigma^n_+$, be $n_i$. Then, uniqueness of $\wp_\lambda$ implies that $\lim_{n \rightarrow \infty} n_i /n = \wp_\lambda \vert_i$, which implies:
\cgather{
\sum_{i=1}^{\vert Q\vert} \wp_\lambda\big \vert_i  H(\pitilde(q_i,\cdot)) = \lim_{n\rightarrow \infty} \frac{1}{\vert \Sigma^n_+ \vert} \sum_{x \in \Sigma^n_+} H\left (\pitilde([x],\cdot)\right )
}
establishing Claim B. Thus, we can write:
\cgather{
\left \vert \lim_{n\rightarrow \infty} \frac{1}{\vert \Sigma^n_+ \vert} \sum_{x \in \Sigma^n_+} \lim_{\vert s \vert \rightarrow \infty}H(\phi^s(x_0 x)) - H(G) \right\vert\\
= \left \vert \lim_{n\rightarrow \infty} \frac{1}{\vert \Sigma^n_+ \vert} \sum_{x \in \Sigma^n_+} \left \{ \lim_{\vert s \vert \rightarrow \infty}H(\phi^s(x_0 x)) - H(\pitilde([x],\cdot)) \right \} \right\vert\\
\leqq \lim_{n\rightarrow \infty} \frac{1}{\vert \Sigma^n_+ \vert} \sum_{x \in \Sigma^n_+} \left \vert  \lim_{\vert s \vert \rightarrow \infty}H(\phi^s(x_0 x)) - H(\pitilde([x],\cdot))  \right\vert
}
We note that Claim A implies that $x_0x$ $\epsilon$-synchronizes to $[x]$ in $G$, which then implies from  Theorem~\ref{thmsymderiv}, and Lemma~\ref{lementropydev}:
\cgather{
 \lim_{n\rightarrow \infty} \frac{1}{\vert \Sigma^n_+ \vert} \sum_{x \in \Sigma^n_+} \left \vert  \lim_{\vert s \vert \rightarrow \infty}H(\phi^s(x_0 x)) - H(\pitilde([x],\cdot))  \right\vert\\
< \lim_{n\rightarrow \infty} \frac{1}{\vert \Sigma^n_+ \vert} \sum_{x \in \Sigma^n_+} \mathds{B}(\epsilon, \vert \Sigma \vert ) 
}
which completes the proof.
\end{proof}
Next we modify the Dvoretzky-Kiefer-Wolfowitz inequality, to be applicable to the case where the number of samples drawn is itself a random variable.
\begin{lem}[DKW-bound for symbolic derivatives] \label{lemB1} For a string $s$ generated by a PFSA, and a given $\epsilon$-synchronizing string $x_0$:
\cgathers{\forall x \in \Sigma^\star \textrm{ such that  $x_0x$ occurs in  $s$ with probability $\zeta > 0$ },\\
Pr\left  (\left \vert \left \vert \phi^s(x_0x)  -  \lim_{\mathclap{\vert s' \vert \rightarrow \infty}}\phi^{s'}(x_0 x)  \right \vert \right \vert_{\smash{\infty}}  > \epsilon\right ) < 8(1+\frac{1}{e}) e^{-\vert s \vert \zeta \frac{\epsilon^2}{1+\epsilon^2}}
}

\end{lem}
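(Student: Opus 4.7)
The plan is to condition on the (random) count $N := \#^s(x_0 x)$ of occurrences of $x_0 x$ in $s$, reduce to the classical Dvoretzky--Kiefer--Wolfowitz inequality for fixed sample sizes, and absorb the randomness of $N$ via a Chernoff-type concentration. By the corollary preceding this lemma, $x_0 x$ inherits the $\epsilon$-synchronizing property from $x_0$, and consequently the symbol immediately following each occurrence of $x_0 x$ can be treated (up to the built-in $\epsilon$-slack) as an i.i.d.\ draw from $\phi^\infty := \lim_{|s'|\to\infty}\phi^{s'}(x_0 x) = \wp_{x_0 x}\Pitilde$. Conditional on $N=n$, classical DKW applied to the empirical distribution on the finite alphabet $\Sigma$ yields a bound of the form $Pr(\|\phi^s(x_0x) - \phi^\infty\|_\infty > \epsilon \mid N=n) \leqq 2\,e^{-2n\epsilon^2}$.

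Next I would split the event at a threshold $n^\star \in (0,|s|\zeta)$ to be chosen:
\[
Pr\bigl(\|\phi^s(x_0x) - \phi^\infty\|_\infty > \epsilon\bigr) \leqq Pr\bigl(\text{deviation event},\, N \geqq n^\star\bigr) + Pr(N < n^\star).
\]
The first term is controlled by a maximal-inequality summation of DKW over all $n \geqq n^\star$; the resulting geometric series with ratio $e^{-2\epsilon^2}$ contributes at most the factor $(1+1/e)$ in front of $e^{-2n^\star \epsilon^2}$. The second term is bounded by a Chernoff estimate for the ergodic count: writing $N$ as a sum of Bernoulli indicators of starting positions of $x_0 x$ (whose stationary mean-frequency is $\zeta$) and using Hoeffding (or Azuma, to handle the mild Markov dependence coming from the PFSA structure) yields an estimate of the form $Pr(N < n^\star) \leqq \exp\bigl(-|s|\zeta(1-n^\star/(|s|\zeta))^2/2\bigr)$. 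Choosing $n^\star$ so as to equalize the two exponents produces the stated decay $\exp\bigl(-|s|\zeta\,\epsilon^2/(1+\epsilon^2)\bigr)$; the prefactor $8(1+1/e)$ then absorbs the DKW constant $2$, the geometric factor $(1+1/e)$, a factor of $2$ for the two signs of the $\ell^\infty$ deviation, and a further factor of $2$ for combining the two pieces of the split.

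The main obstacle is the i.i.d.\ reduction in the first step. Strictly speaking, the symbols following successive occurrences of $x_0 x$ along a single realization are \emph{not} independent, because they are generated by the hidden chain on $Q$ whose state distribution immediately after $x_0 x$ equals $\wp_{x_0 x} = \alpha\,\bvec + (1-\alpha)\,u$ with $\alpha \geqq 1-\epsilon$ and $\bvec \in \mathcal{E}$. The non-i.i.d.\ component carries weight at most $1-\alpha \leqq \epsilon$, so it perturbs the effective sampling distribution by at most $\epsilon$ in $\ell^\infty$; this perturbation is absorbed into the tolerance $\epsilon$ already appearing on the left-hand side of the claim, exactly as in the proof of Theorem~\ref{thmsymderiv}. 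Once this reduction is justified, the remainder of the argument is a routine arithmetic optimization of $n^\star$ and careful bookkeeping of constants.
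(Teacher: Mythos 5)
Your overall architecture is the same as the paper's: condition on the random count $N=\#^s(x_0x)$, apply the fixed-sample DKW bound $2e^{-2n\epsilon^2}$ conditionally on $N=n$, control the event that $N$ falls below its typical value $|s|\zeta$ by a Chernoff-type bound, and tune a threshold to balance the two exponents. Your one-sided split at $n^\star$ is in fact cleaner than the paper's two-sided window around $|s|\zeta$: on $\{N\ge n^\star\}$ you can keep the weights $Pr(N=n)$ and bound $\sum_{n\ge n^\star}2e^{-2n\epsilon^2}Pr(N=n)\le 2e^{-2n^\star\epsilon^2}$, avoiding the polynomial prefactor $\lceil 2|s|\zeta r\rceil$ that the paper must then absorb at the cost of half its exponent. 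Also, your worry about the i.i.d.\ reduction and the $\epsilon$-slack is not needed for this particular lemma: the statement compares $\phi^s(x_0x)$ with its own limit $\lim_{|s'|\to\infty}\phi^{s'}(x_0x)$, not with $\pitilde([x_0x],\cdot)$, so synchronization plays no role here; the dependence between successive occurrences is glossed over equally by you and by the paper.

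The gap lies in the two quantitative steps that are supposed to produce the stated constants. First, the claim that the ``geometric series with ratio $e^{-2\epsilon^2}$'' contributes at most a factor $(1+1/e)$ is false: $\sum_{n\ge n^\star}e^{-2n\epsilon^2}=e^{-2n^\star\epsilon^2}/(1-e^{-2\epsilon^2})$, and $1/(1-e^{-2\epsilon^2})\sim 1/(2\epsilon^2)$ diverges as $\epsilon\to 0$; the correct move is the probability-weighted sum above, which needs no such factor at all. Second, the assertion that equalizing $2n^\star\epsilon^2$ with your Hoeffding exponent $|s|\zeta\bigl(1-n^\star/(|s|\zeta)\bigr)^2/2$ ``produces the stated decay $e^{-|s|\zeta\epsilon^2/(1+\epsilon^2)}$'' is not derived, and it is not true uniformly in $\epsilon$: writing $n^\star=(1-\delta)|s|\zeta$, balancing gives $\delta=2\epsilon\bigl(\sqrt{1+\epsilon^2}-\epsilon\bigr)$ and a decay exponent $2(1-\delta)\epsilon^2|s|\zeta$, which dominates $\frac{\epsilon^2}{1+\epsilon^2}|s|\zeta$ only for $\epsilon$ below roughly $0.45$; for larger $\epsilon<1$ the resulting bound is weaker than the one claimed, and no amount of prefactor slack can compensate a deficient exponent as $|s|\zeta$ grows. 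So, as written, the specific bound $8(1+1/e)\,e^{-|s|\zeta\epsilon^2/(1+\epsilon^2)}$ is asserted rather than established; you need to actually carry out the optimization and prefactor absorption (the paper does this with its choice $r=\epsilon^2/(1+\epsilon^2)$ in the Chernoff term and the elementary estimate $ye^{-y}\le e^{-1}$), or else adjust the constants your argument genuinely delivers. In fairness the paper's own balancing step is itself loose, but your sketch leaves unproved exactly the arithmetic that fixes the advertised exponent and prefactor.
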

\begin{proof}
We note that $\phi^s(x_0x)$ is an empirical distribution with the limiting distribution given by $ \lim_{\vert s' \vert \rightarrow \infty}\phi^{s'}(x_0 x) \triangleq \phi^\star$. Using the DKW inequality~\cite{massart1990}, and denoting the number of occurrences of $x_0x$ in $s$ with the random variable $N_{x_0x}$, we have:
\mltlne{
Pr\left  (  \left \{ \left \vert \left \vert \phi^s(x_0x)  -  \lim_{\vert s' \vert \rightarrow \infty}\phi^{s'}(x_0 x)  \right \vert \right \vert_\infty > \epsilon \right \}  \bigwedge \{N_{x_0 x} = n'  \} \right )\\ \shoveright{\leqq 2 e^{-2\epsilon^2 n'}Pr\left ( \{N_{x_0 x} = n'  \}  \right )
}\\
\shoveleft{\Rightarrow Pr\left  (   \left \vert \left \vert \phi^s(x_0x)  -  \lim_{\vert s' \vert \rightarrow \infty}\phi^{s'}(x_0 x)  \right \vert \right \vert_\infty > \epsilon \right )} \\ \leqq \sum_{n' \in \mathbb{N}} 2 e^{-2\epsilon^2 n'}Pr\left ( \{N_{x_0 x} = n'  \}  \right )
}
We partition $\mathbb{N}$ into disjoint sets $U_r$ and $V_r =  \mathbb{N} \setminus U_r$, parametrized by $r > 0$, where:
\cgather{
U_r = \Bigg [ \bigg \lfloor \vert s \vert \zeta (1- r) \bigg \rfloor , \bigg \lceil \vert s \vert \zeta (1+ r) \bigg \rceil \Bigg ] 
}
Using Chernoff bounds for the probability of $n' \in V_r$, we have:
\mltlne{
 Pr\left  (   \left \vert \left \vert \phi^s(x_0x)  -  \lim_{\vert s' \vert \rightarrow \infty}\phi^{s'}(x_0 x)  \right \vert \right \vert_\infty > \epsilon \right ) 
\\\shoveleft{\leqq \sum_{n' \in U_r} 2 e^{-2\epsilon^2 n'}Pr\left ( \{N_{x_0 x} = n'  \}  \right )} 
\\ \shoveright {
+ \sum_{n' \in V_r} 2 e^{-2\epsilon^2 n'}Pr\left ( \{N_{x_0 x} = n'  \}  \right )
}
\\\shoveleft{\leqq
\bigg ( \big \lceil 2\vert s \vert \zeta r \big \rceil \times 2e^{-2 \epsilon^2 \vert s \vert \zeta (1-r)} \times 1 \bigg ) + \bigg (1 \times 2e^{-\frac{r^2 \vert s \vert \zeta}{2+r}} \bigg )} \\
\leqq 4\big \lceil \vert s \vert \zeta r \big \rceil e^{-2 \epsilon^2 \vert s \vert \zeta (1-r)} +  2e^{-\frac{r^2 \vert s \vert \zeta}{2+r}}
}
Denoting $\vert s \vert \zeta $ as $t$, we have the bound:
\cgather{
\forall r > 0, f(r) = 4 \big \lceil rt \big \rceil e^{-2\epsilon^2t(1-r)}+2e^{-\frac{r^2t}{2+r}}
}
 We note that the two terms are equal if:
\cgather{
2\epsilon^2t(1-r) = \frac{r^2t}{2+r} + \ln(2 \lceil rt \rceil ) 
}
It follows that if we solve for $r$ in terms of $\epsilon$ after dropping the  non-negative log-term, then the first term would be bigger or equal compared to the second. Solving the resulting quadratic, we get:
\cgather{
r < \frac{\epsilon^2}{1+\epsilon^2}
}
A larger value of $r$  makes the first term larger, and the second term smaller; hence we use $r= \frac{\epsilon^2}{1+\epsilon^2}$, leading to the non-tight  bound:
\cgather{
f(r) < 8 \left \lceil \frac{\epsilon^2}{1+\epsilon^2} t \right \rceil e^{-2\frac{\epsilon^2}{1+\epsilon^2}t} < 8(1+\frac{\epsilon^2}{1+\epsilon^2} t) e^{-2\frac{\epsilon^2}{1+\epsilon^2}t}
}
Using the fact that  $\forall y \in \mathbb{R}, 1-y \leqq e^{-y}$, we have:
\cgather{
f(r) < 8e^{-2\frac{\epsilon^2}{1+\epsilon^2}t} + 8 e^{-\frac{\epsilon^2}{1+\epsilon^2}t -1 } < 8\left (1 + \frac{1}{e} \right ) e^{-\frac{\epsilon^2}{1+\epsilon^2}t }
}
which completes the proof.
\end{proof}

\begin{cor}[To Lemma~\ref{lemB1}]\label{corlemB1} For $s$ generated by a PFSA, and an $\epsilon$-synchronizing $x_0$, we have for any $x \in \Sigma^\star$:
\cgathers{
 Pr\left  (  \bigg \vert H(\phi^s(x_0x) )  -  H\left (\lim_{\vert s' \vert \rightarrow \infty}\phi^{s'}(x_0 x)\right )  \bigg \vert  >  \mathds{B}(\epsilon, \vert \Sigma \vert )\right ) \notag \\ <  8\left (1+\frac{1}{e}\right ) e^{-\vert s \vert \zeta \frac{\epsilon^2}{1+\epsilon^2}}}
\end{cor}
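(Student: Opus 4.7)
The plan is to derive this corollary as an almost immediate consequence of chaining Lemma~\ref{lementropydev} with Lemma~\ref{lemB1}. The key observation is that Lemma~\ref{lementropydev} gives a deterministic (distribution-independent) implication: whenever two probability distributions on $\Sigma$ are within $\epsilon$ in the $\infty$-norm, their entropies differ by at most $\mathds{B}(\epsilon,\vert\Sigma\vert)$. Taking the contrapositive, any event on which the entropies differ by more than $\mathds{B}(\epsilon,\vert\Sigma\vert)$ is contained in the event on which the distributions themselves differ by more than $\epsilon$ in $\infty$-norm.

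Concretely, I would set $p = \phi^s(x_0 x)$ and $q = \lim_{\vert s'\vert\to\infty} \phi^{s'}(x_0x)$. Then Lemma~\ref{lementropydev} yields the set inclusion
\[
\Bigl\{\bigl\vert H(p) - H(q)\bigr\vert > \mathds{B}(\epsilon,\vert\Sigma\vert)\Bigr\} \subseteq \Bigl\{\Vert p - q\Vert_\infty > \epsilon \Bigr\},
\]
so that monotonicity of probability gives
\[
Pr\bigl(\vert H(p) - H(q)\vert > \mathds{B}(\epsilon,\vert\Sigma\vert)\bigr) \leqq Pr\bigl(\Vert p - q\Vert_\infty > \epsilon\bigr).
\]
The right-hand side is exactly what Lemma~\ref{lemB1} controls, and invoking that lemma delivers the bound $8(1+1/e)\,e^{-\vert s\vert\zeta \epsilon^2/(1+\epsilon^2)}$.

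There is essentially no obstacle here beyond being careful to cite the right lemmas in the right direction. The only subtle point to double-check is that Lemma~\ref{lementropydev} is stated for arbitrary distributions on $\Sigma$ (not for any probabilistically constrained pair), so it applies to the random $\phi^s(x_0x)$ and the deterministic limit $\phi^{\star}$ pointwise, enabling the set-inclusion step without any measurability issue. With that in hand, the proof is a two-line chain: deterministic implication on entropies, then apply Lemma~\ref{lemB1}.
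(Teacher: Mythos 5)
Your proposal is correct and is essentially the paper's own argument: the paper likewise combines the deterministic implication from Lemma~\ref{lementropydev} (that an $\infty$-norm deviation of at most $\epsilon$ forces an entropy deviation of at most $\mathds{B}(\epsilon,\vert\Sigma\vert)$) with the probability bound of Lemma~\ref{lemB1}, only phrased via the complementary events rather than your contrapositive set inclusion. No gap to report.
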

\begin{proof}
It follows from Lemma~\ref{lementropydev} and continuity of entropy that 
\mltlne{
  \big \vert \big \vert \phi^s(x_0x)  -  \lim_{\vert s' \vert \rightarrow \infty}\phi^{s'}(x_0 x)  \big \vert \big \vert_\infty \leqq \epsilon \notag \\ \Rightarrow 
 \bigg \vert  H(\phi^s(x_0x) )  -  H\left ( \lim_{\vert s' \vert \rightarrow \infty} \phi^{s'}(x_0 x)\right )  \bigg \vert  \leqq \mathds{B}(\epsilon, \vert \Sigma \vert )\\
\shoveleft{\textrm{Using Lemma~\ref{lemB1}, we have: }}\\
\shoveleft{Pr\left  (\big \vert \big \vert \phi^s(x_0x)  -  \lim_{\mathclap{\vert s' \vert \rightarrow \infty}}\phi^{s'}(x_0 x)  \big \vert \big \vert_{\smash{\infty}}   \leqq  \epsilon\right ) }
\geqq 1-  8(1+\frac{1}{e}) e^{-\vert s \vert \zeta \frac{\epsilon^2}{1+\epsilon^2}}
 \\
\Rightarrow 
Pr\left  (  \bigg \vert  H(\phi^s(x_0x) )  -  H\left ( \lim_{\vert s' \vert \rightarrow \infty}\phi^{s'}(x_0 x)\right )  \bigg \vert  \leqq \mathds{B}(\epsilon, \vert \Sigma \vert )\right )  \\ \geqq 1-  8\left (1+\frac{1}{e}\right ) e^{-\vert s \vert \zeta \frac{\epsilon^2}{1+\epsilon^2}}   
}
which completes the proof.
\end{proof}
%
\begin{thm}[Bound on Entropy Calculation with Finite Samples]\label{thmB1} For any string $x$ generated by a PFSA $G=(Q,\Sigma,\delta,\pitilde)$, and a given $\epsilon$-synchronizing string $x_0$, there exist  $ C_0, C_1$  depending only on the size of the alphabet $\vert \Sigma \vert$, such that, for any independently chosen set of strings  $\ND \subseteqq \Sigma^\star$:
\cgathers{
Pr \Bigg ( \bigg \vert   \frac{1}{\vert\ND \vert} \sum_{\mathclap{\phantom{X} x \in \ND}} H(\phi^s(x_0 x))   - \lim_{n\rightarrow \infty} \frac{1}{\vert \Sigma^{n}_+ \vert} \sum_{x \in \Sigma^{n}_+} \lim_{\mathclap{\phantom{X} \vert s \vert \rightarrow \infty}}H(\phi^s(x_0 x)) \bigg \vert \notag \\ > \mathds{B}(\epsilon, \vert \Sigma \vert ) + \epsilon\Bigg ) \leqq  C_0 \frac{1+\epsilon^2}{\vert s \vert \epsilon^3} + 2 e^{C_1\vert \ND \vert  \epsilon^{2} } 
}
\end{thm}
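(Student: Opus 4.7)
The strategy is a triangle-inequality decomposition of the error into a finite-sample piece (comparing $H(\phi^s(x_0x))$ to its $|s|\to\infty$ limit for each $x\in\ND$) and a Monte-Carlo piece (comparing the empirical average over the independent draw $\ND$ to the true uniform average over $\Sigma^n_+$). Writing $\widehat{H}^{s}_{\ND} = \frac{1}{\vert\ND\vert}\sum_{x\in\ND} H(\phi^s(x_0 x))$, $\widehat{H}^{\infty}_{\ND} = \frac{1}{\vert\ND\vert}\sum_{x\in\ND} \lim_{\vert s'\vert\to\infty} H(\phi^{s'}(x_0 x))$, and $H^\star$ for the double-limit on the right-hand side, the event in the theorem is contained in $\{\vert\widehat{H}^{s}_{\ND} - \widehat{H}^{\infty}_{\ND}\vert > \mathds{B}(\epsilon,\vert\Sigma\vert)\} \cup \{\vert\widehat{H}^{\infty}_{\ND} - H^\star\vert > \epsilon\}$, so a union bound reduces the proof to estimating each of these two probabilities separately.

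\textbf{Monte-Carlo piece.} Since the strings in $\ND$ are drawn independently and each random variable $\lim_{\vert s'\vert\to\infty} H(\phi^{s'}(x_0 x))$ takes values in $[0,\log\vert\Sigma\vert]$, Hoeffding's inequality directly gives $Pr(\vert\widehat{H}^{\infty}_{\ND} - H^\star\vert > \epsilon) \leqq 2 \exp(-C_1 \vert\ND\vert \epsilon^2)$ for $C_1 = 2/\log^2\vert\Sigma\vert$, recovering the second summand of the claimed bound. (If $\ND$ is drawn non-uniformly one must instead check that $E[\widehat{H}^{\infty}_{\ND}]=H^\star$ under the sampling distribution; ergodicity of the underlying QSP makes this automatic when $\ND$ is sampled from a long realisation.)

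\textbf{Finite-sample piece.} Bound the average deviation by the average of absolute deviations, $Y_s := \frac{1}{\vert\ND\vert}\sum_{x\in\ND} \vert H(\phi^s(x_0 x)) - H(\lim_{\vert s'\vert\to\infty}\phi^{s'}(x_0 x))\vert$. For each $x\in\ND$, Corollary~\ref{corlemB1} supplies the pointwise tail bound $Pr(\cdot > \mathds{B}(\epsilon,\vert\Sigma\vert)) < 8(1+1/e)\exp(-\vert s\vert \zeta_x \epsilon^2/(1+\epsilon^2))$, where $\zeta_x$ is the stationary probability of $x_0 x$. Integrating this tail (equivalently, splitting $E[Y_s]$ into contributions below and above level $\mathds{B}(\epsilon,\vert\Sigma\vert)$, with the change of variable $t=\mathds{B}(\eta,\vert\Sigma\vert)$ and the asymptotics $\mathds{B}(\eta,\vert\Sigma\vert)\sim\eta\log(\vert\Sigma\vert/\eta)$ for small $\eta$) produces an upper bound $E[Y_s] = O((1+\epsilon^2)/(\vert s\vert\epsilon^2))$. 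A final Markov step $Pr(Y_s > \mathds{B}(\epsilon,\vert\Sigma\vert)) \leqq E[Y_s]/\mathds{B}(\epsilon,\vert\Sigma\vert)$, together with the lower bound $\mathds{B}(\epsilon,\vert\Sigma\vert)\geqq c\epsilon$ near $\epsilon=0$, yields the stated algebraic factor $C_0(1+\epsilon^2)/(\vert s\vert\epsilon^3)$.

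\textbf{Main obstacle.} The most delicate point is the Markov bound on $E[Y_s]$: the tail estimate in Corollary~\ref{corlemB1} involves the per-string stationary mass $\zeta_x$, which is nonuniform across $\ND$. One must either show that the $\zeta_x$ factors integrate out favourably when $\ND$ is sampled with frequency-proportional weights (e.g.\ from a long sample path), or invoke the strong connectivity of $\mathcal{P}_{\mathcal H}$ to replace the pointwise bound by a uniform one through a strictly positive lower envelope on the reachable stationary mass (the constants $C_0,C_1$ absorb the resulting $\vert\Sigma\vert$-dependence but not the $\vert s\vert$ or $\epsilon$ dependence). Matching the precise $(1+\epsilon^2)/\epsilon^3$ scaling (rather than the coarser $1/\epsilon$ Markov bound) requires carefully tracking the derivative of $\mathds{B}(\cdot,\vert\Sigma\vert)$ near zero against the sub-Gaussian decay $\exp(-\vert s\vert\zeta_x\epsilon^2/(1+\epsilon^2))$ when the expectation is computed.
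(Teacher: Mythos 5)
Your overall decomposition is the same as the paper's: a triangle inequality splitting the error into a finite-sample piece ($H(\phi^s(x_0x))$ versus its $\vert s\vert\to\infty$ limit, averaged over $\ND$) and a sampling piece (the $\ND$-average of limiting entropies versus the limiting uniform average), followed by a union bound. Your Hoeffding treatment of the sampling piece is a legitimate, indeed cleaner, substitute for the paper's route (the paper bounds that term by $\Vert\widetilde{\wp}-\wp_\lambda\Vert_1\log\vert\Sigma\vert$ and applies a DKW-type bound to the empirical state-visitation frequencies); both give $2e^{-C_1\vert\ND\vert\epsilon^2}$ with $C_1=2/\log^2\vert\Sigma\vert$, and your caveat about the sampling distribution of $\ND$ is the right one to flag.

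The gap is in the finite-sample piece: the Markov-on-expectation route cannot deliver the term $C_0(1+\epsilon^2)/(\vert s\vert\epsilon^{3})$. Integrating the tail of Corollary~\ref{corlemB1} through the change of variable $t=\mathds{B}(\eta,\vert\Sigma\vert)$ gives $E\vert H(\phi^s(x_0x))-H(\phi^\infty)\vert = O\big(\log(\vert s\vert\zeta_x)/\sqrt{\vert s\vert\zeta_x}\big)$ — a $1/\sqrt{\vert s\vert}$ quantity, not the $O\big((1+\epsilon^2)/(\vert s\vert\epsilon^2)\big)$ you claim; and the alternative split of $E[Y_s]$ at level $\mathds{B}(\epsilon,\vert\Sigma\vert)$ leaves a leading summand equal to $\mathds{B}(\epsilon,\vert\Sigma\vert)$ itself, which makes a Markov bound at that same level vacuous. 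Either way, Markov at a threshold $\geqq c\epsilon$ yields at best an $O\big(1/(\epsilon\sqrt{\vert s\vert})\big)$ probability bound, a strictly slower rate in $\vert s\vert$ than the statement asserts. The missing idea is the paper's: do not pass to expectations at all. Keep the exponential tails of Corollary~\ref{corlemB1}, group the strings of $\ND$ by the state $\delta(q_0,x)$ they lead to from the synchronized state — so the relevant occurrence masses are the at most $\vert Q\vert$ empirical state weights $\widetilde{\wp}_i$, which simultaneously disposes of your ``nonuniform $\zeta_x$'' obstacle — bound each term $\widetilde{\wp}_i\,e^{-\vert s\vert\widetilde{\wp}_i\epsilon^2/(1+\epsilon^2)}\leqq(1+\epsilon^2)/(e\vert s\vert\epsilon^2)$ uniformly in $\widetilde{\wp}_i$ via $ye^{-ty}\leqq 1/(et)$, and then invoke $\vert Q\vert\leqq(\vert\Sigma\vert-1)/\epsilon$, which follows from $\epsilon$-synchronization. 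That last bound on the number of states, not the inequality $\mathds{B}(\epsilon,\vert\Sigma\vert)\geqq c\epsilon$, is what produces the third power of $\epsilon$ and the alphabet dependence $C_0=(8/e+8/e^2)(\vert\Sigma\vert-1)$.
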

\begin{proof}
We note that:
\cgather{
A = \bigg \vert   \frac{1}{\vert \ND \vert} \sum_{\mathclap{x \in \ND}} H(\phi^s(x_0 x))   - \lim_{n\rightarrow \infty} \frac{1}{\vert \Sigma^{n}_+ \vert} \sum_{x \in \Sigma^{n}_+} \lim_{\mathclap{\vert s \vert \rightarrow \infty}}H(\phi^s(x_0 x)) \bigg \vert \notag \\
\leqq \bigg \vert   \frac{1}{\vert \ND \vert} \sum_{\mathclap{x \in \ND}} H(\phi^s(x_0 x))   -  \frac{1}{\vert \Sigma^{n}_+ \vert} \sum_{x \in \Sigma^n_+} \lim_{\mathclap{\vert s \vert \rightarrow \infty}}H(\phi^s(x_0 x)) \bigg \vert \notag \\
+  \bigg \vert   \frac{1}{\vert \ND \vert} \sum_{\mathclap{x \in \ND}} \lim_{\mathclap{\phantom{xxx}\vert s \vert \rightarrow \infty}}H(\phi^s(x_0 x))     - \lim_{\mathclap{n\rightarrow \infty\phantom{x}}} \frac{1}{\vert \Sigma^{n}_+ \vert} \sum_{\mathclap{x \in \Sigma^{n}_+\phantom{x}}} \lim_{\mathclap{\phantom{xx}\vert s \vert \rightarrow \infty}}H(\phi^s(x_0 x)) \bigg \vert \notag
}
We denote the two RHS terms as $B$ and $C$, and note:
\cgather{
B \triangleq \bigg \vert   \frac{1}{\vert \ND \vert} \sum_{\mathclap{x \in \ND}} H(\phi^s(x_0 x))   -  \frac{1}{\vert \Sigma^{n}_+ \vert} \sum_{x \in \Sigma^n_+} \lim_{\mathclap{\vert s \vert \rightarrow \infty}}H(\phi^s(x_0 x)) \bigg \vert \notag\\
= \bigg \vert   \sum_{i=1}^{\vert Q\vert} \widetilde{\wp}_i \left (  H(\phi^s(x_0x'))   -  \lim_{\mathclap{\vert s \vert \rightarrow \infty}}H(\phi^s(x_0x')) \right ) \bigg \vert 
\intertext{where $x_0$ $\epsilon$-synchronizes to $q_0\in Q$, $\delta(q_0,x') = q_i$, and $\widetilde{\wp}$ is the empirical estimate of the stationary distribution. Using the bound from Corollary~\ref{corlemB1}: }
Pr(B > \mathds{B}(\epsilon, \vert \Sigma \vert )) <  8\left (1+\frac{1}{e}\right ) \sum_{i=1}^{\vert Q\vert}\widetilde{\wp}_i  e^{-\vert s \vert \widetilde{\wp}_i \frac{\epsilon^2}{1+\epsilon^2}} 
\notag \\
 <  8\left (1+\frac{1}{e}\right ) \frac{(1+\epsilon^2)\vert Q\vert}{e \vert s \vert \epsilon^2 } \label{eqB}
}
For the second RHS term:
\cgather{
C \triangleq  \bigg \vert   \frac{1}{\vert\ND \vert} \sum_{\mathclap{x \in \ND}} \lim_{\mathclap{\phantom{xxx}\vert s \vert \rightarrow \infty}}H(\phi^s(x_0 x))     - \lim_{\mathclap{n\rightarrow \infty\phantom{x}}} \frac{1}{\vert \Sigma^{n}_+ \vert} \sum_{\mathclap{x \in \Sigma^{n}_+\phantom{x}}} \lim_{\mathclap{\phantom{xx}\vert s \vert \rightarrow \infty}}H(\phi^s(x_0 x)) \bigg \vert\notag \\
= \bigg \vert    \sum_{i=1}^{\vert Q\vert} \left ( \widetilde{\wp}_i -  \wp_\lambda \big \vert_i  \right )\lim_{\vert s \vert \rightarrow \infty}  H(\phi^s(x_0 x'))   \bigg \vert \leqq \| \widetilde{\wp} -\wp_\lambda \|_1 \log \vert \Sigma \vert
\intertext{where $x_0$ $\epsilon$-synchronizes to $q_0\in Q$ and $\delta(q_0,x') = q_i$. Using DKW:}
Pr(\| \widetilde{\wp} - \wp_\lambda \|_\infty > \epsilon) \leqq 2e^{-2 \vert\ND\vert \epsilon^2} \notag \\
\Rightarrow Pr (\| \widetilde{\wp} - \wp_\lambda \|_1  \log \vert \Sigma \vert \leqq \epsilon ) > 1-  2e^{-\frac{2}{\log^2 \vert \Sigma \vert} \vert\ND\vert \epsilon^2} \label{eqD} 
}
Using the bounds in Eq.~\eqref{eqB}, and \eqref{eqD}, we get:
\cgathers{
E \triangleq Pr \left  (B+C \leqq \mathds{B}(\epsilon, \vert \Sigma \vert ) + \epsilon \right ) \\ > \left (1 - 8 (1+1/e ) \frac{(1+\epsilon^2)\vert Q\vert}{e \vert s \vert \epsilon^2 }  \right ) \times \left (  1- 2 e^{- \vert \ND \vert  \epsilon^2 \frac{2}{\log^2 \vert \Sigma \vert}   }  \right )
}
Since we are using $\epsilon$-synchronization, it follows that the number of states $\vert Q\vert$ is upper bounded by $(\vert \Sigma \vert -1)/ \epsilon$
which then yields:
\calign{
&E > \left (1 - C_0 \frac{1+\epsilon^2}{\vert s \vert \epsilon^{3}} \right  ) \left (1 - 2 e^{-C_1\vert \ND \vert \epsilon^2 } \right  )   \label{eqCdef}
\intertext{with $
C_0 = (8/e+8/e^2)(\vert \Sigma \vert -1), \textrm{and } C_1 = \frac{2}{\log^2 \vert \Sigma \vert}$}
\Rightarrow &Pr \left  (A \leqq \mathds{B}(\epsilon, \vert \Sigma \vert ) + \epsilon \right ) > 1 - C_0 \frac{1+\epsilon^2}{\vert s \vert \epsilon^{3}} - 2 e^{-C_1\vert \ND \vert \epsilon^{2} } \notag
}
which completes the proof.
\end{proof} 
%
\def\TOL{\textrm{\scshape Count}_\textrm{total}}
\def\Tol{\textrm{\scshape Count}_\textrm{map}}
\def\Num{\textrm{\scshape Num}}
\begin{algorithm}[t]\sffamily \fontsize{7}{8}\selectfont
\SetKw{BRK}{break}
\DontPrintSemicolon
\KwIn{Data sequence $s$ over alphabet $\Sigma$, $\epsilon$,  Confidence level $\alpha$ }
\KwOut{ Entropy rate $\mathbf{h}$, Uncertainty $\mathbf{E}$ at specified confidence level }
\BlankLine

Initialize $h=0$\;
Initialize $\TOL = 0$\;
Initialize $\Tol = \varnothing$  \color{Red4}
 \tcc*[f]{\texttt{ \fontsize{6}{6}\selectfont hashtable with keys as  probability distributions, $\phantom{xxxxxxxxxx.xxxxxxxxxxxxxxx}$  and values as doubles}}\color{black}\;
Set $ C_0 = (8/e+8/e^2)(\vert \Sigma \vert -1)$,   $C_1  = 2 / \log^2 \vert \Sigma \vert $\;
Set $N_{\mathrm{min}} = 10$\color{Red4}\tcc*[f]{ \texttt{ \fontsize{6}{6}\selectfont Any small integer suffices (See Section~\ref{secimpl})}}\color{black}\;
\BlankLine
\rule{.945\columnwidth}{1pt}
\color{DodgerBlue4}
\tcc*[f]{\scriptsize I. $\epsilon$-synchronization String Identification}\;\color{black}
\BlankLine

\ForEach{$x \in \Sigma^{\log(1/\epsilon)}_+$}{
$D[x] \longleftarrow \phi^s(x) $\;
}

 $A \longleftarrow \{x': \textrm{$D[x']$ is on the convex hull of the set of values in hashtable $D$} \}$\;
 $x_0 \longleftarrow \argmax_{x \in A} \#^sx$\color{Red4}\tcc*[f]{ \texttt{ \fontsize{6}{6}\selectfont $\epsilon$-synchronization string}}\color{black} \;
$p_0  \longleftarrow (\#^sx_0 )\vert s \vert $ \color{Red4}\tcc*[f]{ \texttt{ \fontsize{6}{6}\selectfont Occurrence prob. of $\epsilon$-synchronization string}}\color{black}\;
\rule{.945\columnwidth}{1pt}
\color{DodgerBlue4}
\tcc*[f]{\scriptsize II. Entropy Rate Estimatation}\;\color{black}
\BlankLine

Select  $\ND \subset \Sigma^\star $ with length $\ell$  strings drawn with probability $\frac{1}{\vert \Sigma \vert^\ell}$\;
\color{Red4}\tcc*[h]{ \texttt{ \fontsize{6}{6}\selectfont $\vert \ND \vert \sim 10^7 \log^2 \vert \Sigma \vert$ sufficient for negligible  uncertainty contribution}}\color{black}\;
\ForEach{$x \in \ND$}{
\eIf{$\#^s x_0x > N_{\mathrm{min}}$}{
Compute $u \longleftarrow \phi^s(x_0x)$\color{Red4}\tcc*[f]{ \texttt{ \fontsize{6}{6}\selectfont Symbolic derivative at $x_0x$}}\color{black}\;
\eIf{$\exists \textbf{ key } v \in \Tol \textrm{ s.t. }  \| u - v \|_\infty \leqq \epsilon $}{
$\Tol[v] \longleftarrow \Tol [v] +1$\;
}{
Set $\Tol[u] = 1$\;
}
$\TOL \longleftarrow \TOL +1$\;
}{
Delete $x$ from $\ND$\;
}
}

\ForEach{$\mathrm{key} \ v \in \Tol$}{
$\mathbf{h} \longleftarrow \mathbf{h} + \left (\dfrac{\Tol[v]}{\TOL}  \times H(v)\right )$\color{Red4}\tcc*[f]{ \texttt{ \fontsize{6}{6}\selectfont $H(v)$: entropy of $v$}}\color{black}\;
}
\rule{.945\columnwidth}{1pt}
\color{DodgerBlue4}
\tcc*[f]{\scriptsize III. Uncertainty Estimation}\;\color{black}
\BlankLine

$\epsilon_\star \longleftarrow \min \epsilon_0$ satisfying:
$ \alpha+  C_0 \frac{1+\epsilon^2_0}{\vert s \vert \epsilon^{3}_0} + 2 e^{-C_1\vert \ND \vert  \epsilon^{2}_0 } + e^{-\epsilon_0 p_0 \vert s \vert} \leqq  1 $\;
$\mathbf{E} \longleftarrow \epsilon_\star + 2 \mathds{B}(\epsilon_\star, \vert \Sigma \vert)$\;
\KwRet $\mathbf{h}$, $\mathbf{E}$\; 
\captionN{Detailed pseudocode for entropy rate estimation}\label{algo1}
\end{algorithm}

\begin{figure}[t]
\tikzexternalenable 
\centering
\def\HGT{1.5in}
\def\WDT{2.85in}
\def\SCALE{1}
\def\datafile{Figures/alph27.dat}
\begin{tikzpicture}[font=\bf \sffamily \fontsize{6}{6}\selectfont]
 \pgfplotsset{every axis legend/.append style={
at={(0.265,0.625)},
anchor=south}}
\begin{semilogxaxis}[ title={(a) Alphabet Size = $27$}, title style={yshift=-.1in},legend cell align=left,
legend style={ xshift=1in, yshift=-.1in, draw=white, fill= gray, fill opacity=0.2, 
text opacity=1,},
axis line style={black, opacity=0.5,  thick, rounded corners=0pt},
axis on top=false, 
scale=1,grid style={dashed, gray!40},
enlargelimits=false, 
width=\WDT, 
height=\HGT,     
,ymax=1.5, 
 ymin=0.00001,
semithick,grid,
axis background/.style={top color=gray!0,},
xlabel={input length [\# of symbols]},
yticklabel style={xshift=.025in},
ylabel={error [bits / letter]},
ylabel style={yshift=-.2in},
 xlabel style={yshift=.1in},
    scaled x ticks = false,
      x tick label style={/pgf/number format/fixed,
      /pgf/number format/1000 sep = \thinspace 
      }
  ];
%


\addplot[  thick, Red4]table[x index={0}, y expr=(\thisrowno{1}*\SCALE)]
  {\datafile};
\addlegendentry{95\% confidence};

\addplot[  thick, DodgerBlue4]table[x index={2}, y expr=(\thisrowno{3}*\SCALE)]
  {\datafile};
\addlegendentry{99\% confidence};

\node [circle, fill=none, draw=black, inner sep=1pt] (a) at (axis cs:5e6, .9) {};
\draw[gray, thin](a -| current plot begin) -- (a) node [midway,sloped, yshift=0.02in, below, text=black!80, xshift=0.08in, font=\bf \sffamily \fontsize{6}{6}\selectfont] {$0.9$};
\draw[gray, thin]  (a) -- (current plot end -| a) node [midway,sloped, yshift=-0.03in, above, text=black!80, xshift=0.03in, font=\bf \sffamily \fontsize{6}{6}\selectfont] {$5\times 10^6$};

\end{semilogxaxis}
\end{tikzpicture}

\vspace{2pt}

\def\SCALE{1}
\def\datafile{Figures/alph2.dat}
\begin{tikzpicture}[font=\bf \sffamily \fontsize{6}{6}\selectfont]
 \pgfplotsset{every axis legend/.append style={
at={(0.265,0.625)},
anchor=south}}
\begin{semilogxaxis}[ title={(b) Alphabet Size = $2$}, title style={yshift=-.1in},legend cell align=left,
legend style={ xshift=1in, yshift=-0.1in, draw=white, fill= gray, fill opacity=0.2, 
text opacity=1,},
axis line style={black, opacity=0.5,  thick, rounded corners=0pt},
axis on top=false, 
scale=1,grid style={dashed, gray!40},
enlargelimits=false, 
width=\WDT, 
height=\HGT,     
,ymax=.95, 
 ymin=0.00001,
semithick,grid,
axis background/.style={top color=gray!0,},
xlabel={input length [\# of symbols]},
yticklabel style={xshift=.025in},
ylabel={error [bits / letter]},
ylabel style={yshift=-.2in},
 xlabel style={yshift=.1in},
    scaled x ticks = false,
      x tick label style={/pgf/number format/fixed,
      /pgf/number format/1000 sep = \thinspace 
      } 
  ];
%


\addplot[  thick, Red4]table[x index={0}, y expr=(\thisrowno{1}*\SCALE)]
  {\datafile};
\addlegendentry{95\% confidence};

\addplot[  thick, DodgerBlue4]table[x index={2}, y expr=(\thisrowno{3}*\SCALE)]
  {\datafile};
\addlegendentry{99\% confidence};

\node [circle, fill=none, draw=black, inner sep=1pt] (a) at (axis cs:5e6, .22) {};
\draw[gray, thin](a -| current plot begin) -- (a) node [midway,sloped, yshift=-0.02in, above, text=black!80, xshift=0.08in, font=\bf \sffamily \fontsize{6}{6}\selectfont] {$0.22$};
\draw[gray, thin]  (a) -- (current plot end -| a) ;

\end{semilogxaxis}
\end{tikzpicture}
\vspace{-7pt}

\captionN{\textbf{Uncertainty bounds  for different alphabet sizes.} Note for a data length of $5\times 10^6$, we have an uncertainty of $0.9$ bits at $95 \%$ confidence for a $27$ letter alphabet (plate (a)); the corresponding uncertainty for a binary alphabet is $0.22$ bits (plate(b)). }\label{figBND}
\end{figure}
\begin{figure*}[t] 
\tikzexternalenable 
\centering
\input{Figures/figTXT.tex}

\vspace{-7pt}

\captionN{\textbf{Applications.} Plates (a-b): Entropy rate of English text. Shannon's experiment using human subjects  puts the estimate around $1$ bit per letter. We achieve very close estimates. The state-of-the-art plots are replicated from~\cite{shgs96}. Plates (c-d): Entropy rate of sequences generated by a chaotic dynamical system and a binary generating partition. Plates (e-f): Entropy of symbol streams generated by probabilistic automata. Note that even with two states, and a binary alphabet, a non-synchronizable generating process leads to significantly larger errors with the LZ-based approaches.}\label{figTXT}
\end{figure*}
\begin{thm}[Main Theorem]\label{thmB2}
Given a finite string $s$ generated by a PFSA  $G=(Q,\Sigma,\delta,\pitilde)$, and a string  $x_0 \in \Sigma^\star$ satisfying the pre-conditions described in Theorem~\ref{thmderivheap}, 
we have for  any independently chosen set of strings $\ND \subseteqq \Sigma^\star$:
\cgather{
Pr \Bigg ( \bigg \vert   \frac{1}{\vert \ND \vert} \sum_{\mathclap{x \in \ND}} H(\phi^s(x_0 x))   - H(G) \bigg \vert > \epsilon +2\mathds{B}(\epsilon, \vert \Sigma \vert ) \Bigg ) \notag \\ \leqq C_0 \frac{1+\epsilon^2}{\vert s \vert \epsilon^{3}} + 2 e^{-C_1\vert \ND \vert  \epsilon^{2} } + e^{-\epsilon p_0 \vert s \vert} \label{eqBND}
}
where $
C_0=(8/e+8/e^2)(\vert \Sigma \vert -1)$,    $C_1=2 / \log^2 \vert \Sigma \vert$ and $p_0$ is the non-zero occurrence probability of $x_0$ in $s$.
\end{thm}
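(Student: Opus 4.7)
The plan is to decompose the error $\big|\frac{1}{\vert\ND\vert}\sum_{x\in\ND}H(\phi^s(x_0x)) - H(G)\big|$ by a triangle inequality around the intermediate quantity
$$H^\star \triangleq \lim_{n\rightarrow\infty}\frac{1}{\vert\Sigma^n_+\vert}\sum_{x\in\Sigma^n_+}\lim_{\vert s\vert\rightarrow\infty}H(\phi^s(x_0 x)),$$
and then to handle each piece separately using the already-established results, with an additional union-bound contribution coming from the chance that the chosen $x_0$ fails to be $\epsilon$-synchronizing. So first I would write
$$\Big|\tfrac{1}{\vert\ND\vert}\sum_{x\in\ND}H(\phi^s(x_0x)) - H(G)\Big| \leqq \underbrace{\Big|\tfrac{1}{\vert\ND\vert}\sum_{x\in\ND}H(\phi^s(x_0x)) - H^\star\Big|}_{T_1} + \underbrace{\big|H^\star - H(G)\big|}_{T_2}.$$

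The next step is to condition on the event $E$ that $x_0$ is $\epsilon$-synchronizing. Under $E$, the Corollary to Lemma~\ref{lementropydev} gives the deterministic bound $T_2 < \mathds{B}(\epsilon,\vert\Sigma\vert)$, and Theorem~\ref{thmB1} gives $\Pr(T_1 > \mathds{B}(\epsilon,\vert\Sigma\vert) + \epsilon \mid E) \leqq C_0\frac{1+\epsilon^2}{\vert s\vert \epsilon^3} + 2e^{-C_1\vert\ND\vert\epsilon^2}$. Combining these, on $E$ the total deviation is at most $\epsilon + 2\mathds{B}(\epsilon,\vert\Sigma\vert)$ except on a set of conditional probability bounded by the Theorem~\ref{thmB1} estimate.

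To remove the conditioning, I would apply a simple union bound. Let $\mathcal{F}$ denote the event that the displayed deviation exceeds $\epsilon + 2\mathds{B}(\epsilon,\vert\Sigma\vert)$. Then
$$\Pr(\mathcal{F}) \leqq \Pr(\mathcal{F}\cap E) + \Pr(E^c) \leqq \Big(C_0\tfrac{1+\epsilon^2}{\vert s\vert\epsilon^3} + 2e^{-C_1\vert\ND\vert\epsilon^2}\Big) + \Pr(E^c),$$
and Theorem~\ref{thmderivheap} bounds $\Pr(E^c)$ by $e^{-\vert s\vert\epsilon p_0}$ since $x_0$ is chosen as a vertex of the convex hull of the derivative heap with stationary occurrence probability $p_0$. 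Summing these three contributions yields precisely the claimed inequality~\eqref{eqBND}.

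The routine ingredients (triangle inequality, union bound, identification of $T_2$ with the Corollary, identification of $T_1$ with Theorem~\ref{thmB1}) are all straightforward; the main subtlety I would verify carefully is the compatibility of the conditioning in the application of Theorem~\ref{thmderivheap}. Specifically, Theorem~\ref{thmderivheap} guarantees $\epsilon$-synchronization of $x_0$ with high probability over the realization $s$, while Theorem~\ref{thmB1} also uses randomness coming from $s$ (through the empirical symbolic derivatives) together with the independent sampling of $\ND$. I would make sure that the union bound is taken over the joint probability space so that no dependence is double-counted, and that the $p_0$ appearing in $\Pr(E^c)$ is consistently interpreted as the stationary (not empirical) occurrence probability of $x_0$. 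Once this bookkeeping is clean, the statement follows immediately by substitution of the three bounds.
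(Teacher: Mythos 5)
Your proposal is correct and follows essentially the same route as the paper: decompose around the intermediate quantity $\lim_{n\rightarrow\infty}\frac{1}{\vert\Sigma^n_+\vert}\sum_{x\in\Sigma^n_+}\lim_{\vert s\vert\rightarrow\infty}H(\phi^s(x_0x))$, control the two pieces with the Corollary to Lemma~\ref{lementropydev} and Theorem~\ref{thmB1}, and charge the possibility that $x_0$ fails to be $\epsilon$-synchronizing to the $e^{-\epsilon p_0\vert s\vert}$ bound from Theorem~\ref{thmderivheap}. The only cosmetic difference is bookkeeping: the paper multiplies the two success probabilities and expands $(1-a-b)(1-c)>1-a-b-c$, whereas you take a union bound over the failure events, which yields the identical three-term bound and is, if anything, slightly cleaner about the conditioning you flag at the end.
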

\begin{proof}
It follows  from Corollary~\ref{lementropydev}, and  Theorem~\ref{thmB1}, that:
\cgather{
T \triangleq Pr \Bigg ( \bigg \vert   \frac{1}{\vert \ND \vert} \sum_{\mathclap{x \in \ND}} H(\phi^s(x_0 x))   - H(G) \bigg \vert \leqq  \epsilon +2\mathds{B}(\epsilon, \vert \Sigma \vert ) \Bigg ) \notag \\ > \left (1- C_0 \frac{1+\epsilon^2}{\vert s \vert \epsilon^{3}} - 2 e^{-C_1\vert\ND \vert) \epsilon^{2} }\right )\times Pr(x_0 \textrm{ is $\epsilon$-synchronizing})  \notag 
\intertext{Assuming $x_0$ satisfies the pre-conditions described in  Theorem~\ref{thmderivheap}:}
T > \left (1- C_0 \frac{1+\epsilon^2}{\vert s \vert \epsilon^{3}} - 2 e^{-C_1\vert \ND \vert \epsilon^{2} }\right )\times \left (1 - e^{\epsilon p_0 \vert s \vert}  \right )  \notag \\
> 1- C_0 \frac{1+\epsilon^2}{\vert s \vert \epsilon^{3}} - 2 e^{-C_1\vert \ND \vert \epsilon^{2} } - e^{\epsilon p_0 \vert s \vert}  
}
which completes the proof.
\end{proof}
\begin{rem}
We note the following:
\begin{itemize}
\item For binary alphabets, we have: $
C_0 \simeq 4.03, C_1 = 2
$.
\item Each term on  the RHS of Eq.~\eqref{eqBND} reflects   a specific contribution:
\cgather{
\underbrace{C_0 \frac{1+\epsilon^2}{\vert s \vert \epsilon^{3}}}_{\mathclap{\text{Data-length Dependence}}} + \overbrace{2 e^{-C_1\vert \ND \vert \epsilon^{2} }}^{\mathclap{\text{Dependence on Summation Depth}}} + \underbrace{e^{-\epsilon p_0 \vert s \vert}}_{\mathclap{\text{Synchronization Error Dependence}}}
}
\end{itemize}
\item Eq.~\eqref{eqBND} bounds the maximum uncertainty at a given confidence level, which depends on the alphabet size. The uncertainty relationships for two alphabet sizes ($2$, $27$) are shown in Figure~\ref{figBND}.
\end{rem}
\begin{cor}[To Theorem~\ref{thmB2}]
As a function of the length of the observed data string $s$, the upper and lower confidence bands for the estimated entropy rate, with any fixed confidence level, 
converge at a rate $\displaystyle O\left (\frac{\log \vert s \vert }{ \vert s \vert^{1/3}}\right )$.
\end{cor}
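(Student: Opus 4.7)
The plan is to invert the bound of Theorem~\ref{thmB2} by fixing a target confidence level $1-\alpha$ and asking how small we may force the half-width $\epsilon + 2\mathds{B}(\epsilon, \vert \Sigma \vert)$ as $\vert s \vert \to \infty$. Concretely, I would require each of the three terms on the RHS of Eq.~\eqref{eqBND} to be at most $\alpha/3$. The first (polynomial) term $C_0 (1+\epsilon^2)/(\vert s \vert \epsilon^3)$ is the binding constraint: it forces $\epsilon^3 \gtrsim 1/(\vert s \vert \alpha)$, i.e., $\epsilon = \Theta(\vert s\vert^{-1/3})$. The other two terms decay much faster at this choice: the summation-depth term $2e^{-C_1 \vert \ND \vert \epsilon^2}$ can be driven below $\alpha/3$ by taking $\vert \ND \vert$ to grow at least as fast as a constant multiple of $\vert s \vert^{2/3}$ (as indeed the pseudocode of Algorithm~\ref{algo1} already permits), while the synchronization term becomes $e^{-\epsilon p_0 \vert s \vert} = \exp(-\Theta(\vert s \vert^{2/3}))$ and is therefore negligible.

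With $\epsilon = \Theta(\vert s\vert^{-1/3})$ fixed, the width of the confidence band is $\epsilon + 2 \mathds{B}(\epsilon, \vert \Sigma\vert)$. Expanding the generalized binary entropy function near $\epsilon = 0$ using Definition~\ref{defbef} gives
\begin{align*}
\mathds{B}(\epsilon, \vert\Sigma\vert) &= \epsilon \log\frac{\vert\Sigma\vert-1}{\epsilon} + (1-\epsilon)\log\frac{1}{1-\epsilon} \\
&= \epsilon \log(1/\epsilon) + \epsilon\log(\vert\Sigma\vert-1) + O(\epsilon),
\end{align*}
so that $\mathds{B}(\epsilon,\vert\Sigma\vert) = (1+o(1))\,\epsilon \log(1/\epsilon)$ as $\epsilon \to 0$. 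Substituting $\epsilon = \Theta(\vert s\vert^{-1/3})$ yields $\epsilon \log(1/\epsilon) = \Theta(\vert s\vert^{-1/3}\log \vert s\vert^{1/3}) = \Theta(\log\vert s\vert/\vert s\vert^{1/3})$, which dominates the bare $\epsilon$ term. Therefore the upper and lower confidence bands both tighten at rate $O(\log\vert s \vert / \vert s \vert^{1/3})$.

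The main step that needs care is verifying that the polynomial term is genuinely the rate-limiting one; once this is established, everything else is routine algebra on the three RHS terms of Eq.~\eqref{eqBND}. The exponential terms are essentially free to absorb into lower-order corrections because their $\epsilon$-dependence is through $e^{-\epsilon^2 \vert \ND\vert}$ and $e^{-\epsilon p_0 \vert s\vert}$, both of which remain super-polynomially small at the choice $\epsilon \sim \vert s\vert^{-1/3}$, provided $p_0$ is a strictly positive constant (guaranteed by Theorem~\ref{thmderivheap}, so that $x_0$ is a genuine sub-word of $s$ with positive stationary frequency). No deeper probabilistic input is required beyond Theorem~\ref{thmB2}; the corollary is essentially a balancing-of-exponents argument.
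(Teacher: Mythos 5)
Your proposal is correct and follows essentially the same route as the paper: fix the confidence level, invert the polynomial (data-length) term of Eq.~\eqref{eqBND} to get $\epsilon = \Theta(\vert s \vert^{-1/3})$, observe that the two exponential terms are negligible at that scale, and then note that the band width $\epsilon + O(\mathds{B}(\epsilon,\vert\Sigma\vert))$ behaves like $\epsilon\log(1/\epsilon) = \Theta(\log\vert s\vert/\vert s\vert^{1/3})$. The only differences are cosmetic (splitting the confidence budget three ways rather than the paper's $k_{data}+k_{depth}$, and your slightly more explicit expansion of $\mathds{B}$), so there is nothing to flag.
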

\begin{proof}
For  a given  confidence level  $k=k_{data}+k_{depth}$, where $k_{data}$  captures the dependence on the data length through the first RHS term in Eq.~\eqref{eqBND}, we get:
\cgather{
\epsilon^3 = \frac{C_0(1+\epsilon^2)}{ \vert s \vert k_{data} } 
\Rightarrow \epsilon < \left ( \frac{2C_0}{ \vert s \vert k_{data} } \right )^{1/3} \label{eq105}
}
The distance between the confidence bands is given by:
\cgather{
\mathcal{B} = 2\epsilon + 4\mathds{B}(\epsilon, \vert \Sigma \vert )
}
and using  Eq.~\eqref{eq105}, along  with the definition of the generalized binary entropy function (Definition~\ref{defbef}), completes the proof.
\end{proof}
\section{Algorithmic Implementation}\label{secimpl}
The algorithmic steps for the proposed entropy rate estimation technique is enumerated in Algorithm~\ref{algo1}. The inputs to the algorithm is the data stream $s$, $\epsilon$, and the confidence level $\alpha$ at which the error  estimate is desired.
Importantly, the size of the set of sampled string $\ND$ is not required to be an input; 
if computational effort is not a concern, then the uncertainty contribution from the term involving $\vert \ND \vert$ (See Eq.~\eqref{eqBND}) can be reduced to negligible levels by using a sample set with  
$\vert \ND \vert \simeq \frac{K}{2\epsilon^2} \log^2 \vert \Sigma \vert  
$, which would result in  uncertainty contribution of $\sim e^{-K}$. Using $\vert \ND \vert \simeq 10^7 \log^2 \vert \Sigma \vert$ is generally sufficient to make this factor negligible; smaller sets may be used under computational constraints, which would  lead to  increased uncertainty in the entropy estimate.

Particularly rare strings may accumulate errors, which is prevented in the implementation by ignoring strings that occur too infrequently (Note $N_\mathrm{min}$ in step 5 and step 15 of Algorithm ~\ref{algo1}).
\subsection{Application to English text, Chaotic systems \& Random walks}
We demonstrate  Algorithm~\ref{algo1} in three different applications. 
Our first application is the estimation of the entropy rate of English text. 
Shannon's experimental approach with human subjects~\cite{shannon1951} suggests that English has an entropy of around one bit per letter. However, the  large alphabet size ($26$ letters + space = $27$), makes it computationally hard to verify this value. We apply our algorithm to relatively small corpora: the King James Bible (KJB) (which has a length $\sim 4 \times 10^6$ letters), and the collected works of Shakespeare (SHK, length $\sim 4.8 \times 10^6$ letters). These  particular examples allow direct  comparison against the results reported in \cite{shgs96}. We obtain entropy rates which are significantly closer to the Shannon estimate (See Figure~\ref{figTXT}):  $1.05  \btl$ for KJB, and $1.25  \btl$  for SHK, while Sch{\"u}rmann $\etal$ obtain the corresponding estimates  to be $1.73 \btl$ and $2.13 \btl$. The authors in \cite{shgs96} were able to improve the SHK estimate to  $1.7 \btl$ using the ``ansatz'' mentioned before;  Algorithm~\ref{algo1} yields an improved estimate without any such assumptions. 

Our second application is entropy estimation of sequences produced by chaotic dynamical systems. We use the same iteration map used in \cite{shgs96}: namely $x_{n+1} = 1 - r x^2_n$, and use a binary generating partition at $x=0$. We analyze the  cases
$r=1.7499$ (Figure~\ref{figTXT}(b)) where it is very strongly intermittent, and $r=1.75$ which is the Pomeau-Manneville intermittency point (Figure~\ref{figTXT}(c)). As before, we converge faster in the non-trivial case, and gets very close to the theoretical entropy given by the positive  Lyapunov exponent due to Pesin's identity~\cite{rissanen83}.

Our third application analyzes sequences generated by finite memory ergodic stationary stochastic processes, modeled directly via probabilistic automata (Figure~\ref{figTXT}(e-f)). Thus, we are looking at generalized random walks, Inspite of being  somewhat more contrived compared to the first two applications, we can gain important insights from this example. Even with two states, and with a binary alphabet, LZ-based approaches may perform significantly worse, particularly for short streams with long range dependencies. We note that the PFSA generator used in Figure~\ref{figTXT}(e) is non-synchronizable, $i.e.$, no finite length of observed history tells us definitively what the current state is. Nevertheless, as we showed in Theorem~\ref{thmepssynchro}, the machine is $\epsilon$-synchronizable; and Algorithm~\ref{algo1} performs quite well, converging to the theoretical value with just under $10^4$ symbols. In contrast, the LZ-compression based algorithm has an error of about $17\%$ even after $3 \times 10^4$ symbols. This is discrepancy in performance disappears if the generating process is synchronizable, $e.g.$, if a finite history tells us precisely what the current state is. Indeed with a synchronizable PFSA in Figure~\ref{figTXT}(f) (here, the last symbol is sufficient to fix the current state), the algorithms have comparable performances.

\section{Summary \& Conclusion}\label{sec6}
We delineate a new algorithm for estimating entropy rates of symbol streams,  generated by  hidden ergodic stationary processes. We establish the correctness of the algorithm by exploiting a connection with the theory of probabilistic automata, and that of finite measures on infinite strings. Importantly, we establish a distribution-free limit theorem. Using established results from non-parametric statistics, we show that  entropy  estimate converges at the rate $O(\log \vert s \vert / \sqrt[3]{ \vert s \vert} )$ as a function of the input data length $\vert s \vert$. In consequence, we are able to derive confidence bounds on the  estimate, and dictate the  worst-case data length required to guarantee a specified error bound at a given confidence level. Finally, we demonstrate  that, in terms of data requirements,  the proposed algorithm has superior performance to competing approaches, at least in the case of the chosen applications. 

}
 \bibliographystyle{IEEEtran}
\bibliography{BibLib1} 
\end{document}